\numberwithin{equation}{section}
\def\thistime{\number\hh:\ifnum\mm<10{}0\fi\number\mm}
\newtheorem{thm}[equation]{Theorem}
\newtheorem{lem}[equation]{Lemma}
\newtheorem{exs}[equation]{Examples}
\newtheorem{ex}[equation]{Example}
\newtheorem{remark}[equation]{Remark}
\numberwithin{equation}{section}
\def\nn{\nonumber}
\def\Li#1(#2){\textrm{Li}_{#1}\left(#2\right)}
\def\cLi_#1(#2){\mathcal{L}_{#1}\left(#2\right)}
\def\bLi_#1(#2){\mathbf{L}_{#1}\left(#2\right)}
\def\suns{\circleddash}
\def\As{A_\circleddash}
\def\cIs{\mathcal{I}_\circleddash}
\def\cJs{\mathcal{J}_\circleddash}
\def\ZZ{{\mathbb Z}}
\def\IC{{\mathbb C}}
\def\IR{{\mathbb R}}
\def\cD{\mathcal{D}}
\def\cE{\mathcal{E}}
\def\Imm{\Im\textrm{m}}
\def\cEs{\mathcal E_\circleddash}
\newcommand{\eq}[2]{\begin{equation}\label{#1}#2 \end{equation}}
\newcommand{\ml}[2]{\begin{multline}\label{#1}#2 \end{multline}}
\newcommand{\ga}[2]{\begin{gather}\label{#1}#2 \end{gather}}
\newcommand{\surj}{\twoheadrightarrow}
\newcommand{\inj}{\hookrightarrow}
\newcommand{\sD}{{\mathcal D}}
\newcommand{\sE}{{\mathcal E}}
\newcommand{\sO}{{\mathcal O}}
\newcommand{\sV}{{\mathcal V}}
\newcommand{\sW}{{\mathcal W}}
\newcommand{\C}{{\mathbb C}}
\newcommand{\E}{{\mathbb E}}
\newcommand{\F}{{\mathbb F}}
\renewcommand{\H}{{\mathbb H}}
\renewcommand{\P}{{\mathbb P}}
\newcommand{\Q}{{\mathbb Q}}
\newcommand{\R}{{\mathbb R}}
\newcommand{\V}{{\mathbb V}}
\newcommand{\W}{{\mathbb W}}
\newcommand{\Z}{{\mathbb Z}}
\newcommand{\ve}{{\varepsilon}}
\newcommand{\nnn}{\newline\newline\noindent}
\title[The sunset graph]{\bf The elliptic dilogarithm  for the sunset graph}
\author{Spencer Bloch}
\address{5765 S. Blackstone Ave., Chicago, IL 60637, USA}
\email{spencer\_bloch@yahoo.com}
\author{Pierre Vanhove}
 \address{
 Institut des Hautes Etudes Scientifiques\\
 Le Bois-Marie, 35 route de Chartres\\
 F-91440 Bures-sur-Yvette, France\hfill\break
Institut de Physique Th{\'e}orique\\
CEA, IPhT, F-91191 Gif-sur-Yvette, France\\
CNRS, URA 2306, F-91191 Gif-sur-Yvette, France}
\email{pierre.vanhove@cea.fr}
\thanks{IPHT-T/13/217, IHES/P/13/24}
\date{\today}
\begin{document}

 \begin{abstract}
We study the sunset graph defined as the scalar two-point self-energy at
two-loop order. We evaluated the sunset integral for all identical internal
masses in two dimensions. 
We give two calculations for the sunset amplitude; one based on an
interpretation of the amplitude as an inhomogeneous solution of a
classical Picard-Fuchs differential equation, and the other using
arithmetic algebraic geometry, motivic cohomology, and Eisenstein
series. Both methods use the rather special fact that the amplitude in
this case is a family of periods associated to the universal family of
elliptic curves over the modular curve $X_1(6)$. 
 We show that the integral is given by an elliptic dilogarithm
 evaluated at a sixth root of unity modulo periods. We
 explain as well how this elliptic dilogarithm value is related to the
 regulator of a class in the motivic cohomology of the universal
 elliptic family. 
\end{abstract}
\maketitle
\tableofcontents

\section{Introduction}

Scattering amplitudes are fundamental objects describing how particles
interact. At a given loop order in the perturbative expansion in the
coupling constant, there are many ways of constructing the amplitudes
from first principles of quantum field theory. The result is an
algebraic integral with parameters, and the physical problem of
efficient evaluation of the integral is linked to the qualitative
mathematical problem of classifying these multi-valued functions of
the complexified kinematic invariants. The amplitudes are locally
analytic, presenting branch points at the thresholds where particles
can appear.

These questions can be studied order by order in perturbation. At
one-loop order, around a four dimensional space-time, all the
scattering amplitudes can be expanded in a basis of integral functions
given by box, triangle, bubbles and tadpole integral functions,
together with rational term contributions~\cite{Bern:1997sc,Britto:2004nc,Ossola:2006us} (see~\cite{Bern:1996je,Britto:2010xq} for some reviews on this subject).

The finite part of the $\epsilon=(4-D)/2$ expansion of the box and
triangle integral functions is given by dilogarithms of the proper
combination of kinematic invariants. The finite part of the bubble and
tadpole integral is a logarithm function of the physical parameters.

The appearance of the dilogarithm and logarithms at one-loop order is
predictable from unitarity considerations since this reproduces the
behaviour of the one-loop scattering amplitude under single, or
double  two-particle cuts in four dimensions.

The fact that one-loop amplitudes are   expressed as dilogarithms and
logarithms can as well be understood
motivitically~\cite{Bloch:2005bh,Bloch:2010gk}, but the status of
two-loop order scattering amplitude is far less well understood for
generic amplitudes (see for instance~\cite{Johansson:2013sda,Johansson:2012zv,CaronHuot:2012ab,Kosower:2011ty} for some recent progress).

The sunset integral arises as the two-loop self-energy diagram in the
evaluation of higher-order correction in QED, QCD or electroweak
theory precision calculations~\cite{Bauberger:1994nk}, or as a sub-topology of  higher-order
computation~\cite{CaronHuot:2012ab}. As a consequence, it has been the subject of numerous analyses. 
The integral for various configurations of vanishing masses has been analyzed using
  the Mellin-Barnes methods in~\cite{Smirnov}, with two different masses and
  three equal masses in~\cite{Kniehl:2005bc}.
  An asymptotic expansion of the sunset integral has been given
  in~\cite{Broadhurst:1993mw}.  Various forms for the integral
  have been considered either in geometrical
  terms~\cite{Davydychev:1997wa}, displaying some relations to one-loop
  amplitude~\cite{Davydychev:1995mq}, or a  
  representation in terms of hypergeometric function as given 
  in~\cite{Broadhurst:1987ei,Tarasov:2006nk,BorweinC,Kalmykov:2008ge} or as an
  integral of Bessel   functions as in~\cite{Bailey:2008ib,Broadhurst:2008mx}. 
Or a differential equation approach (in close relation to the method used in
section~\ref{sec:PF} of the present work) was
  considered
  in~\cite{Caffo:1998du,Laporta:2004rb,MullerStach:2011ru,Adams:2013nia,Groote:2012pa}.
We refer to these papers
   for a more complete list of references. 

The question of whether the sunset integral can expressed in terms
of known mathematical functions like polylogarithms
has not so far been addressed.

In order to answer  this  question will consider the sunset graph in
two space-time dimensions depicted in figure~\ref{fig:sunsetfig}. 
 The sunset integral with internal
masses in two dimensions is a completely finite integral free of
infra-red and ultra-violet divergences. Working with a finite integral
 will ease the discussion of the  mathematical nature of this
 integral.

 Although the ultimate goal is to understand the properties of
 two-loop amplitudes around four dimensional space-time, the
 restriction to two dimensions is not too bad since dimension shifting
 formulas, given in~\cite{Laporta:2004rb}, relate the result in two dimensions to the finite part of
 the integral in four dimensions.

Another restriction of this work is to focus only on the all equal masses case with all
internal masses  non zero and positive. We will find in this case that the sunset integral  is nicely expressed~\eqref{e:amplitude} in terms of elliptic
dilogarithms obtained by $q$-averaging values of the dilogarithm at a
sixth root of unity in~\eqref{e:Hsunset} with the following $q$-expansion
\begin{equation}
  E_\suns(q) =\sum_{n\in\mathbb Z\backslash\{0\}} \, {(-1)^{n-1}\over2\, n^2}\,
  \left(\sin\left({n\pi\over3}\right)+\sin\left({2n\pi\over3}\right)\right)\,{1\over 1-q^n} \,.
\end{equation}
This expression is locally analytic in $q$ and differs from the one for the regulator map \cite{BlochCMR} that would be expressed in terms of the (non-analytic) Bloch-Wigner dilogarithm  $D(z)$.  The reason for this
difference is explained in section~\ref{section:motiveHS} when evaluating
the amplitude using motivic methods (compare equations \eqref{31a} and \eqref{35}).

We give two calculations for the sunset amplitude; one based on an
interpretation of the amplitude as an inhomogeneous solution of a
classical Picard-Fuchs differential equation (in section~\ref{sec:PF}), and the other using
arithmetic algebraic geometry (in section~\ref{section:motiveHS}), motivic cohomology (in section~\ref{sec:motive}), and Eisenstein
series (in section~\ref{sec:eis}). Both methods use the rather special fact that the amplitude in
this equal mass case is a family of periods associated to the universal family of
elliptic curves over the modular curve $X_1(6)$. The elliptic fibres
$\sE_t$ are naturally embedded in $\P^2$ and pass through the vertices
$(1,0,0), (0,1,0), (0,0,1)$ of the coordinate triangle $\Delta:
XYZ=0$. Let $P \to \P^2$ be the blowup of these three points, so the
inverse image of $\Delta$ is a hexagon $\mathfrak h \subset P$. Then
$\sE_t$ lifts to $P$, and the amplitude period is closely related to
the cohomology 
\eq{}{H^2(P-\sE_t, \mathfrak h^0) 
}
where $\mathfrak h^0 := \mathfrak h - \mathfrak h\cap \sE_t$. 

The motivic picture which emerges from the basic sunset $D=2$ equal
mass case applies as well to all Feynman amplitudes. Quite generally,
the Feynman amplitude will be the solution to some sort of
inhomogeneous Picard-Fuchs equation $PF(x)=f(x)$ where $x$ represents
physical parameters like masses and kinematic invariants. The motive
determines the function $f$. In the equal mass sunset case, $f$ is
constant. In this case, the amplitude motive reduces to an extension of the
motive of an elliptic curve by a constant Tate motive. If the masses
are distinct, the elliptic curve motive is extended by a Kummer motive
which is itself an extension of constant Tate motives. The function
$f$ then involves a non-constant logarithm.  

In general, our idea is to relate the physical amplitude to a
regulator in the sense of arithmetic algebraic geometry applied to a
class in motivic cohomology. One has a hypersurface (depending on
kinematic parameters) ({\it cf.} equation~\eqref{e:Asunset} in the sunset
case) in projective space $\P^n$. One has an integrand ({\it cf.}
equation~\eqref{e:IsunsetSym2d} in the sunset case) which is a top degree meromorphic form on $\P^n$ with a pole along $X$, and one has a chain of integration \eqref{e:Ddef}. Let $\Delta:x_0x_1\cdots x_n=0$ be the coordinate simplex in $\P^n$. One first blows up faces of $\Delta$ on $\P^n$ in such a way that the strict transform $Y$ of $X$ meets all faces properly. In the sunset case, $Y=X$ and the blowup yields $P \to \P^2$, the blowup of the three vertices of the triangle $\Delta$. The total transform $\frak h$ of $\Delta$ in $P$ is a hexagon in the sunset case. Next, one constructs a motivic cochain which is an algebraic cycle $\Xi$ on $P\times (\P^1-\{0,\infty\})^{n-1}$ of dimension $n-1$. One then tries to interpret $\Xi$ as a relative motivic cohomology class in $H^{n+1}_M(P,Y,\Q(n))$. This is possible in the sunset case, but in general one has a closed $Z\subset Y$ and $\Xi$ represents a motivic cohomology class in $H^{n+1}_M(P-Z,Y-Z,\Q(n))$ (cf. \cite{MullerStach:2012mp}). When $Z=\emptyset$, the inhomogeneous term $f$ of the differential equation will be relatively simple, but the study of $Z$ more generally should involve shrinking edges of the graph (cf. op. cit.).
\section{The sunset integral}
\label{sec:sunsetintegral}

\begin{figure}[ht]
  \centering
  \includegraphics[width=8cm]{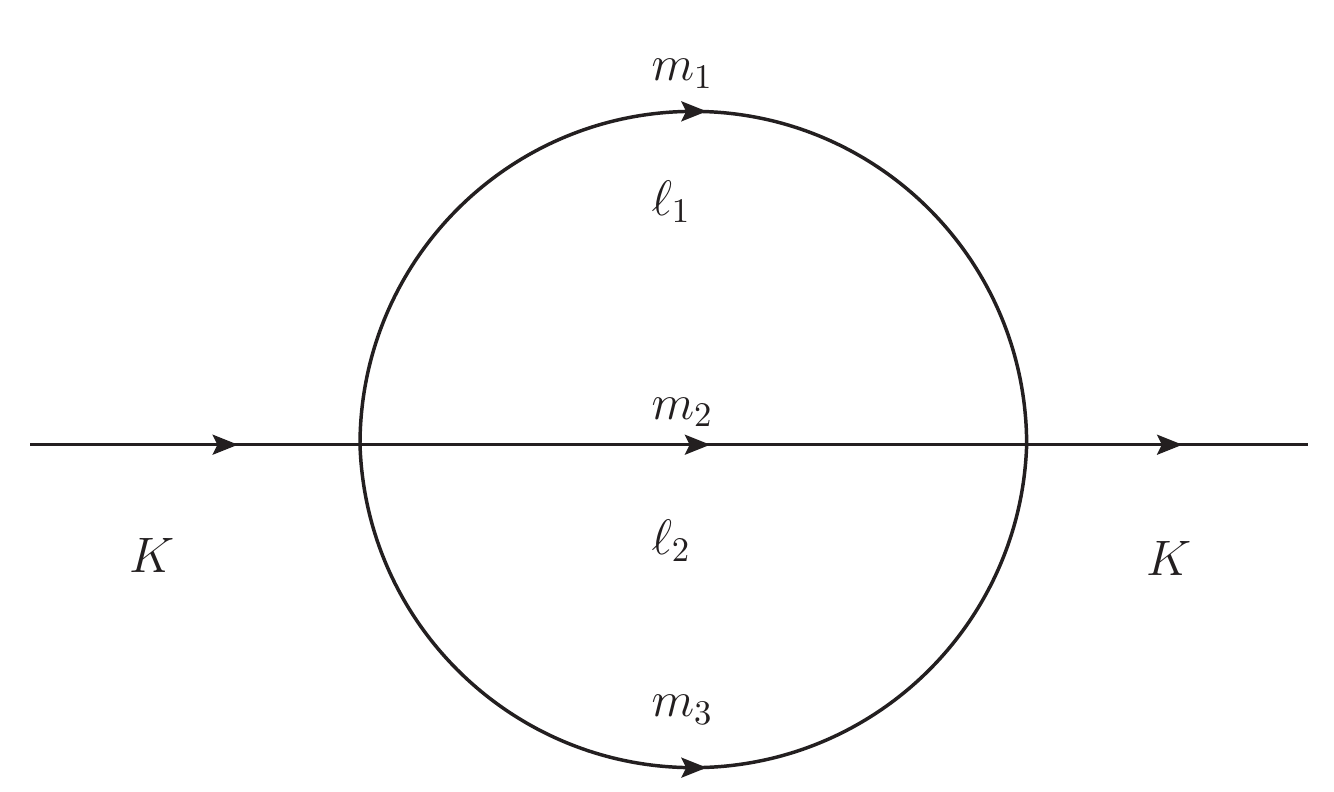}
  \caption{The two-loop sunset graph with three different masses. The
    external momentum is $K$, and the loop momenta are $\ell_1$ and $\ell_2$.}
  \label{fig:sunsetfig}
\end{figure}

The integral associated to the sunset graph given in figure~\ref{fig:sunsetfig}
\begin{equation}
  \label{e:GraphGeneralFeynman}
   \cIs^D(K^2,m_1^2,m_2^2,m_3^2)= (\mu^2)^{3-D} \int_{\IR^{2,2(D-1)}} {
      \delta(K=\ell_1+\ell_2+\ell_3)\over d_1^{2} d_2^{2} d_3^{2}}\,d^D\ell_1 d^D\ell_2\,,
\end{equation}
where  $d_i^2=\ell_i^2-m_i^2+i\varepsilon$ with $\varepsilon>0$  and $m_i\in \IR^+$ and
$\ell_i$ and $K$ are in $\IR^{1,D-1}$ with signature $(+
-\cdots-)$.  With this choice of
metric,   the physical region corresponds to $K^2\geq0$. In the following we will consider the Wick rotated
integral with $\ell_i\in \IR^D$.   We have
introduced the arbitrary scale $\mu^2$ of mass-dimension two so that
the integral is dimensionless.

The Feynman
parametrisation is easily  obtained following~\cite[sec.~6-2-3]{Itzykson:1980rh}
\begin{equation}
  \label{e:IsunsetAlt}
  \cIs^{D}= {\pi^D\Gamma(3-D)\over (\mu^2)^{D-3}}\, \int_0^\infty
  \int_0^\infty {(x+y+xy)^{3(2-D)\over2} \,dx dy\over ((m_1^2x+m_2^2y+m_3^2)(x+y+xy)  -K^2 xy)^{3-D}}\,.
\end{equation}
For geometric reasons, we will use
an alternative symmetric representation of the integral formulated
using homogeneous coordinates 
(see~\cite{MullerStach:2011ru,Adams:2013nia}) 
\begin{equation}
  \label{e:IsunsetSym}
  \cIs^{D}= {\pi^D\Gamma(3-D)\over (\mu^2)^{D-3}}\, \int_{\cD}
  {(xz+yz+xy)^{3(2-D)\over2} \,(x dy\wedge dz-y dx\wedge
    dz+z dx\wedge dy)\over ((m_1^2x+m_2^2y+m_3^2z)(xz+yz+xy)  -K^2 xyz)^{3-D}}\,.
\end{equation}
with the domain of integration
\begin{equation}\label{e:Ddef}
  \cD=\{(x,y,z)\in \mathbb P^2| x,y,z\geq0\}  \,.
\end{equation}

The physical parameters of the sunset integral enter the polynomial
\begin{equation}
  \label{e:Asunset}
\As(x,y,z,m_1,m_2,m_3,K^2):=( m_1^2 x+ m_2^2 y+m_3^2z)(xz+xy+yz)-K^2 xyz\,.
\end{equation}
For generic values of the parameters the curve $\As(x,y,z)=0$ defines an
elliptic curve
\begin{equation}
  \cEs(m_1,m_2,m_3,K^2):=\{\As(x,y,z,m_1,m_2,m_3,K^2)=0| (x,y,z)\in \mathbb P^2  \}\,.
\end{equation}
 When the physical parameters vary this defines a
family of elliptic curves.  For special values the elliptic curve
degenerates. 

\medskip
Before embarking on a general discussion of the property of
the elliptic curve,  we discuss in section~\ref{sec:special} the degenerate points $K^2=0$ and the
pseudo-thresholds $K^2\in\{(-m_1+m_2+m_3)^2, (m_1-m_2+m_3)^2, (m_1+m_2-m_3)^2\}$.

\smallskip
In this paper we will only  consider the all equal  masses case
$m_1^2=m_2^2=m_3^2=m^2$ and $K^2=t\, m^2$ and evaluate the integral in $D=2$ space-time
dimensions 
\begin{equation}
  \label{e:IsunsetSym2d}
  \cIs^{2}= {\pi^2 \mu^2\over m^2}\, \int_{\cD}
  {x dy\wedge dz-y dx\wedge
    dz+z dx\wedge dy\over (x+y+z)(xz+yz+xy)  -t xyz}\,.
\end{equation}
The restriction to two dimensions will make contact with the
motivic analysis easier since the sunset integral is both ultraviolet and infrared finite for
non-vanishing value of the internal mass.

\smallskip
If one is interested by the expression of the integral in 
 four space-time dimensions one needs perform an $\epsilon=(4-D)/2$
 expansion because the all equal masses sunset
integral is not finite due to ultraviolet divergences
\begin{equation}
  \cIs^{4-2\epsilon}(K^2,m^2)=16 \pi^{4-2\epsilon}
  \Gamma(1+\epsilon)^2 \,\left(m^2\over \mu^2\right)^{1-2\epsilon}\,\left(
  {a_2\over\epsilon^2}+{a_1\over\epsilon} +a_0+ O(\epsilon)\right)\,.
\end{equation}
 It was shown in~\cite{Laporta:2004rb} that the
 coefficients of this $\epsilon$-expansion are given by 
derivatives of the  two dimensional sunset integral, $\cJs^2(t)=m^2/(\pi^2\mu^2)\,\cIs^2$,
with the first coefficients given by 
\begin{eqnarray}
  a_2&=&-{3\over8}\cr
a_1&=&{18-t\over 32}\cr
a_0&=& {(t-1)(t-9)\over12} (1+ (t+3) {d\over dt}) \cJs^2(t)+
{13t-72\over 128}\,.
\end{eqnarray}
%

\section{Special values}\label{sec:special}
In this section, we consider the amplitude for some special values of $K^2$. 
\subsection{The case $K^2=0$ with three non vanishing masses}
 \label{sec:K2zero}

We evaluate the integral at some
special values $K^2=0$, assuming that the internal
masses $m_i$ are non-vanishing and positive.  
In $D=2$ dimensions the integral is finite and reads
 \begin{equation}\label{e:I0m1m2m3}
  \cIs^2(0,m_i)=\pi^2\mu^2 \int_{0}^{+\infty}\int_{0}^{+\infty}
 \,{  dxdy \over
    (x+y+xy)(m_1^2x+m_2^2 y+ m_3^2)}\,.
 \end{equation}
 In this form the invariance under the permutations of the masses is
 seen as follows.  The exchange between $m_1$ and $m_2$  corresponds to
 the transformation $(x,y)\to(y,x)$ and the exchange between
 $m_1$ and $m_3$ corresponds to the transformation $(x,y)\to (1/x,y/x)$.

Performing the integration over $y$ we obtain
 \begin{equation}\label{e:I2K2zero}
   \cIs^2(0,m_i^2)={\pi^2\mu^2\over   m_1^2}\, J_0\left({m_1^2\over
         m_2^2},{m_3^2\over
         m_2^2}\right)\,,
 \end{equation}
 with
\begin{eqnarray}
  J_0(a,b)&=&a \, \int_0^\infty {\log R(a,b,x)\over x
    R(a,b,x)-x}\, dx\cr
R(a,b,x)&=&(1+x^{-1})(ax+b) \,,
\end{eqnarray}
We introduce $z$ and $\bar z$ such that 
\begin{equation}
  x R(a,b,x)-x= (1+x)(ax+b)-x= a(x+z)(x+\bar z)\,,  
\end{equation}
with $(1-z)(1-\bar z)= 1/a=m_2^2/m_1^2$ and $z\bar z
=b/a=m_3^2/m_1^2$.
The roots  $z$ and $\bar z$ are given by 
\begin{eqnarray}\label{e:zdef}
  z&:=& {m_1^2-m_2^2+m_3^2+\sqrt\Delta\over 2m_1^2} \\
\bar z&:=&{m_1^2-m_2^2+m_3^2-\sqrt\Delta\over2m_1^2}\,.
\end{eqnarray}
where the discriminant 
\begin{equation}\label{e:Deltadef}
\Delta :=\sum_{i=1}^3 m_i^4- 2\sum_{1\leq i<j\leq
  3}m_i^2m_j^2\,.
\end{equation}

The discriminant vanishes $\Delta=0$ when $m_3=m_1+m_2$ or any
permutation of the masses (we assume all the masses to be positive).

\subsubsection{The case $\Delta\neq0$}
\label{sec:deltaneq0}
In this case in terms of these variables the integral becomes
\begin{eqnarray}
  J_0(z)&=&\int_0^\infty \, j_0(z,\bar z,x)\, dx\\  
\nn j_0(z,\bar z,x)&=& {-\log x+\log(1+x)-\log(1-z)-\log(1-\bar
  z)+\log(x+z\bar z)\over (x+z)(x+\bar z)}\,.
\end{eqnarray}
This integral is easily evaluated, and found to be given by 
\begin{equation}
  \label{e:J0result}
  J_0(z)= {4iD(z)\over z-\bar z}\,,
\end{equation}
where $D(z)$ is
the Bloch-Wigner dilogarithm function  (we refer to
appendix~\ref{sec:ellipticdilog} for some details about this function).
Therefore the sunset integral $\cIs^2(0,m_i^2)$ is given by the
Bloch-Wigner dilogarithm
\begin{equation}\label{e:J0}
    \cIs^2(0,m_i^2)= 2 i\pi^2\,\mu^2\,   {D(z)\over \sqrt{\Delta}}\,.
\end{equation}

Under the permutation of the three masses $(m_1,m_2,m_3)$ the variable
$z$ ranges over the orbit
\begin{equation}
\Big\{z, 1-\bar z, {1\over\bar z},1-{1\over z}, {1\over 1-z}, -{\bar
  z\over 1-\bar z}\Big\}\,.
\end{equation}
Since $\Delta$ is invariant under the permutation of the masses and,
thanks to 
the six functional equations for the Bloch-Wigner function
in~\eqref{e:Dfunc}, the amplitude in~\eqref{e:J0} is invariant.

\subsubsection{The case $\Delta=0$}
\label{sec:case-delta=0}

Suppose for example  $m_3=m_1+m_2$ then the roots in
eq.~\eqref{e:zdef} become  $z=\bar z=(m_1+m_2)/m_1$.  The integral is
easily evaluated and yields 
\begin{multline}
       \cIs^2(0,m_1^2,m_2^2,(m_1+m_2)^2)=  4\pi^2\mu^2\, \Big(
       {\log(m_1+m_2)\over m_1m_2}-{\log(m_1)\over m_2(m_1+m_2)}\cr
-{\log(m_2)\over m_1(m_1+m_2)}\Big)\,.
\end{multline}
The other cases are obtained with a permutation of the masses.

\subsection{At the pseudo-thresholds with three internal masses}
\label{sec:pseudothresholds}

We evaluate the integral at the pseudo-thresholds
$K^2=(m_1+m_2+m_3-2m_i)^2$ with $i=1,2,3$.
Since  the sunset integral is invariant under the permutation of the
internal masses we can assume that $0<m_1\leq m_2\leq m_3$.

By combining direct integrations and numerical evaluations of the
integrals for various values of the internal masses we find the
expressions for the sunset integral at the pseudo-thresholds.

If we introduce the notation $M_i= m_1+m_2+m_3-2m_i\neq0$ (the case
$M_i=0$ has been evaluated in section~\ref{sec:case-delta=0}), we then have
for $i=1,2,3$ 
\begin{equation}
  \label{e:Fiberppm}
  \cIs^2(m_1^2,m_2^2,m_3^2,M_i^2)= {- 2i \pi^2\mu^2\over
    \sqrt{m_1m_2m_3 M_i}} \times  \sum_{j=1}^3 {\partial M_i\over\partial m_j}\times\tilde D\left(\sqrt{m_j^2
    M_i\over m_1m_2m_3}\right)\,,
\end{equation}
where we have defined $\tilde D(z)$ as 
\begin{equation}
 \tilde D(z)=
 {1\over2i}\left(\Li_2(z)-\Li_2(-z)+\frac12\, \log(z^2)\,
  \log\left(1-z\over 1+z\right)\right)\,.
\end{equation}
%
\subsection{The all equal masses case}

When all the internal masses are identical $m_1=m_2=m_3=m$ we set
$t=K^2/m^2$.    We evaluate the integral at the special
values $t=0$ and $t=1$ for further reference in the complete
evaluation done in section~\ref{sec:PF}.
\subsubsection{$t=0$ case}
\label{sec:t=0-case}

For $t=0$  the integral has been evaluated in
section~\ref{sec:K2zero} for generic values of the internal masses.
When all the internal masses are equal $m_1=m_2=m_3=m$,  then
the root in eq.~\eqref{e:zdef}  become the 
sixth-root of unity $z=\zeta_6:=(1+i\sqrt{3})/2$.  The sunset integral in~\eqref{e:J0} then
evaluates to\footnote{ Since $D(\zeta_6)= Cl_2(\pi/3)$, the second
  Clausen sum,  this evaluation is in agreement
with the result of~\cite{Laporta:2004rb}.  }
\begin{equation}\label{e:t=0}
    \cIs^2(0,m^2)=i \pi^2{ \mu^2\over m^2}\, {D(\zeta_6)\over \Imm(\zeta_6)}\,.
\end{equation}
Using eq.~\eqref{e:Drel} one can rewrite this expression for the
integral as
\begin{equation}
   \cIs^2(0,m^2)={6i\pi^2\over\sqrt3}{ \mu^2\over m^2}\,  \sum_{n\geq1} \,\left(\sin\left({n\pi\over3}\right)-\sin\left({2n\pi\over3}\right)\right)
   {1\over n^2}\,.
\end{equation}
or in a form that will be useful later
\begin{equation}\label{e:I0li2}
   \cIs^2(0,m^2)=-{8\pi^2\over5}{ \mu^2\over m^2}\, \left(\Li2(\zeta_6^5)+\Li2(\zeta_6^4)-\Li2(\zeta_6^2)-\Li2(\zeta_6)\right)\,.
\end{equation}

\subsubsection{$t=1$ case}
\label{sec:t=1-case}

The case  $t=K^2/m^2=1$ corresponds to the one-particle pseudo-threshold when one
internal line is cut. At this value the integral takes the form
\begin{equation}
  \cIs^{2}(K^2=m^2,m^2)= \pi^2 {\mu^2\over m^2}\, \int_0^\infty\int_0^\infty
  {dxdy\over (x+1)(y+1)(x+y)}\,.
\end{equation}
This integral is readily evaluated to give
\begin{equation}
  \label{e:t=1}
   \cIs^2(m^2,m^2)=2\pi^2{\mu^2\over m^2}\,{\pi^2\over8}\,.
\end{equation}
One checks that this expression is a special case of
section~\ref{sec:pseudothresholds} since $\tilde D(1)=-i\pi^2/8$.
\section{Families of elliptic curves}\label{sec:K3fam}

We now turn to the discussion of the nature of the sunset integral for
generic values of $t=K^2/m^2$.
We define the integral
\begin{equation}\label{e:Js}
  \cJs^2(t):= {m^2\over\mu^2\pi^2} \, \cIs^2(K^2,m^2) = \int_0^\infty\int_0^\infty  {dx
    dy\over \As(x,y,t)}\,,
\end{equation}
and 
\begin{equation}
  \label{e:Asunsett}
\As(x,y,t)=(1+x+y)(x+y+xy)-xy t\,.
\end{equation}
We have a  family of elliptic curves
\begin{equation}\label{e:Es}
\cE_t:=\{\As(x,y,t)=0, x,y\in\mathbb P^2\}
\end{equation}
for $t\in\mathbb P^1$.
The  following change of variables 
\begin{eqnarray}
  \label{e:iso}
   (x,y,z)&=&\left(\eta\, {2\zeta+\sigma(t-1)\over
     2\sigma(\eta-\sigma)},\eta\,{-2\zeta+\sigma(t-1)\over
     2\sigma(\eta-\sigma)},1\right)\\
\nn(\sigma,\zeta,\eta)&=&\left({x+y+z(1-t)\over x+y},{(t-1)(x-y)(x+y+z(1-t))\over 2(x+y)^2},1\right)\,,
 \end{eqnarray}
brings  the elliptic curve into its standard  Weierstra\ss{} form 
\begin{equation}\label{e:WeierstrassSunset}
  \zeta^2\eta=\sigma(t \eta^2 + {(t-3)^2-12\over4} \,\sigma\eta +\sigma^2)\,.
\end{equation}
The discriminant is given by 
\begin{equation}
  \label{e:Disc}
  \Delta=256\, (t-9) (t-1)^3 t^2\,,
\end{equation}
and the  $j$-invariant is given by 
\begin{equation}
   J(t)= \frac{(t -3)^3 (t^3-9t^2+3t-3)^3}{(t -9) (t -1)^3 t ^2}\,. 
\end{equation}

This family of elliptic curves 
\begin{equation}
  \mathcal S_\circleddash=\{(x,y,z,t)\in \mathbb P^2\times \mathbb P^1,  \As(x,y,z,t)=0\}  \,,
\end{equation}
defines a pencil of elliptic curves in $\mathbb P^2$ corresponding to a modular family of
elliptic curves $f: \sE \to X_1(6)=\{ \tau\in\mathbb C
|\Imm(\tau)>0\}/\Gamma_1(6)$.  This will play an important 
role in section~\ref{sec:eis} when expressing the motive for the sunset graph in term of the
Eisenstein series in $E_\suns(q)$. 

 From the discriminant in eq.~\eqref{e:Disc} we deduce
there are  four singular fibers at $t=0,1,9,\infty$ of
respective 
Kodaira type  $I_2,I_3,I_1,I_6$ listed by Beauville
in~\cite{beauville}. The total space is a rational elliptic surface as shown in~\cite{Beukers}. 

\medskip

For a given $t\in \mathbb P^1$, the elliptic curve $\cE_t$ intersects the domain of integration
$\cD$ in eq.~\eqref{e:Ddef} at  the following six points
\begin{equation}
  \label{e:6points}
  (x,y,z)=\Big\{(0,1,0), (1,0,0), (0,0,1),
(1,-1,0), (0,-1,1), (-1,0,1)\Big\}
\end{equation}
which are mapped to using~\eqref{e:iso}
\begin{multline}
  \label{e:6points2}
  (\sigma,\zeta,\eta)=\Big\{(0,1,0),(0,0,1),(1,{t-1\over2},1),(1,{1-t\over2},1),\cr
(t,{t(t-1)\over2},1),(t,{t(1-t)\over2},1)\Big\}\,.
\end{multline}
\begin{lem}\label{lem:torsion}
The points in the list in~\eqref{e:6points2} are always of torsion for
any values of $t$ of the elliptic curve in~\eqref{e:WeierstrassSunset}.
 The torsion group   is $\ZZ/6\ZZ$ generated by $(t,t(t-1)/2,1)$ or $(t,-t(t-1)/2,1)$.
\end{lem}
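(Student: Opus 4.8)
The plan is to work directly with the Weierstrass model \eqref{e:WeierstrassSunset}, treating $t$ as a transcendental parameter over $\Q$, and to exhibit an explicit order-six point whose multiples are precisely the six points listed in \eqref{e:6points2}. First I would single out the candidate generator $P=(t,t(t-1)/2,1)$, i.e. the affine point $(\sigma,\zeta,\eta)=(t,t(t-1)/2,1)$ in the chart $\eta=1$, and check it lies on the curve $\zeta^2=\sigma(t+\tfrac{(t-3)^2-12}{4}\sigma+\sigma^2)$ by a one-line substitution. The point $O=(\sigma,\zeta,\eta)=(0,1,0)$ is the flex at infinity and serves as the group identity, which matches the first entry of \eqref{e:6points2}. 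Since $-P=(t,-t(t-1)/2,1)$, proving $P$ has order $6$ automatically shows $-P$ also generates $\ZZ/6\ZZ$, so it suffices to handle $P$.

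Next I would compute the multiples $2P,3P$ using the standard chord-tangent addition formulae for the cubic $\zeta^2\eta=\sigma(t\eta^2+\tfrac{(t-3)^2-12}{4}\sigma\eta+\sigma^2)$, and verify: (i) $2P$ equals one of the two points $(1,\pm\tfrac{t-1}{2},1)$; (ii) $3P$ equals the $2$-torsion point $(0,0,1)$ — note $(0,0,1)$ manifestly satisfies $\zeta^2\eta=\sigma(\cdots)$ and has $\zeta=0$, hence is $2$-torsion, so $3P$ being this point forces $6P=O$; and (iii) $2P\neq O$ and $3P\neq O$, which rules out orders $2$ and $3$ and hence pins the order at exactly $6$ (the only divisor of $6$ not excluded). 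Along the way the remaining two listed points $(t,\pm t(t-1)/2,1)$ are $\pm P$ and the pair $(1,\pm(t-1)/2,1)$ are $\pm 2P$, so all six points of \eqref{e:6points2} are accounted for as $\{O,\pm P,\pm 2P,3P\}=\langle P\rangle$, which is the asserted torsion subgroup $\ZZ/6\ZZ$.

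To make "torsion for any value of $t$" rigorous rather than generic, I would note that the computation above is a polynomial identity in $\Q(t)$: the coordinates of $2P$ and $3P$ are rational functions of $t$, and the relation $3P=(0,0,1)$ is an identity of rational functions, hence specializes to every $t$ for which \eqref{e:WeierstrassSunset} is a smooth elliptic curve, i.e. every $t\notin\{0,1,9,\infty\}$ where the discriminant \eqref{e:Disc} is nonzero. For the four singular fibres one can either invoke that a section of a relatively minimal elliptic surface meeting the identity section in a torsion point of the generic fibre stays in the smooth locus and remains torsion of the same order in each Néron fibre, or simply observe that \eqref{e:6points2} already lists the points explicitly there and check directly that they lie in the smooth locus of the Kodaira fibres $I_2,I_3,I_1,I_6$ and form the stated cyclic group in the group of components / smooth part.

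The main obstacle is purely bookkeeping: carrying the chord-tangent formulae through with the symbolic coefficient $\tfrac{(t-3)^2-12}{4}$ produces bulky rational functions, and one must be careful that no denominator (e.g. the slope denominator in computing $2P$) vanishes identically, which would signal that $P$ is itself $2$-torsion — it is not, since $\zeta(P)=t(t-1)/2\not\equiv 0$. Apart from this, there is no conceptual difficulty: everything reduces to verifying a handful of explicit identities in $\Q(t)$ and then a short specialization argument at the four bad fibres.
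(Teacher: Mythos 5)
Your proposal is correct and follows essentially the same route as the paper: fix $O=(0,1,0)$ as the identity, observe $(0,0,1)$ is $2$-torsion, and show that $P=(t,t(t-1)/2,1)$ has order $6$ by computing its small multiples with the chord--tangent law and identifying them with the other listed points. The only organizational difference is that the paper first solves the $3$-division condition $2Q=-Q$ to pin down $(1,\pm(t-1)/2,1)$ as order-$3$ points and then checks $2P_5=P_3$, whereas you propose to compute $2P$ and $3P$ directly and read off $3P=(0,0,1)$; these are the same computation arranged differently, and your extra remarks on specializing the rational-function identities to all $t$ and on the bad fibres are a harmless strengthening that the paper leaves implicit.
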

\begin{proof}
As is standard for the Weierstra\ss{} model, we set the point $O=(0,1,0)$ to be the identity.
Then $(0,0,1)$ is of order 2 (any points of the form $(x,0,1)$ are of
order 2).
 A point $P=(x,y,1)$ is of order 3, if and only if $3P=O$ or
 equivalently $2P=-P$, which  implies for the elliptic curve in eq.~\eqref{e:WeierstrassSunset} that 
 \begin{equation}
   x=    \frac{\left(t-x^2\right)^2}{x \left(((t-6) t-3) x+4 t+4 x^2\right)}
 \end{equation}
This has a solution $x=1$ and $y=\pm(1-t)/2$, so the points
$P_3=(1,{t-1\over2},1)$ and $P_4=(1,{1-t\over2},1)$ are of order 3. For the remaining two
points $P_5=(t,{t(t-1)\over2},1)$ and $P_6=(t,{t(1-t)\over2},1)$, one
checks that $2P_5=P_3$ and $2P_6=P_4$. Therefore the points $P_5$ and
$P_6$ are of order 6.  We have as well that $5P_5=
2P_3+P_5=P_5-P_3=P_6$ and $5P_6=P_5$. We conclude that the torsion
group is $\ZZ/6\ZZ$ generated by $P_5$ or $P_6$.
\end{proof}

\medskip
The Picard-Fuchs equation associated with this family of elliptic
curves given in~\cite{Beukers} reads for $t\in \mathbb P^1 \backslash \{0,1,9,+\infty\}$
\begin{equation}\label{e:PFt2}
  L_t F(t):= {d\over dt} \left( t(t-1)(t-9)  {dF(t)\over dt}\right)+ (t-3) F(t)=0  \,.
\end{equation}

Two independent solutions of the Picard-Fuchs equation providing  the
real period $\varpi_r$ (chosen to be positive)  on the real line $\IR$, and the
imaginary period $\varpi_c$ (with $\Imm(\varpi_c)>0$) are given by~\cite{Beukers,haupt} 
\begin{equation}
  \label{e:complexp}
  \varpi_c(t)={2\pi\over (t-3)^{\frac14}(t^3-9t^2+3t-3)^{\frac14}}\,
  {}_2F_1\left({{1\over12},{5\over12}\atop 1},{1728\over J(t)}\right)\,,
\end{equation}
and 
\begin{equation}
  \label{e:realp}
  \varpi_r(t)={12\sqrt3\over t-9} \,\varpi_c\left(9{t-1\over t-9}\right)\,.
\end{equation}
This real period can as well be computed by the following integral
\begin{equation}\label{e:realpint}
    \varpi_r(t) =6\int_{-1}^0  {dx\over {\partial
      \As(x,y)\over \partial y}|_{\cE}}= 6\int_{-1}^0  {dx\over \sqrt{ (1 + (3 - t) x + x^2)^2 -4 x (1 + x)^2 }}
\end{equation}

Using the result of Maier in~\cite{haupt}, we  can derive the
expression for the  
$\Gamma_1(6)$ invariant Hauptmodul, $t$ as modular form function of
$q:=\exp(2i\pi \tau)$ with $\tau$ the period ratio $\tau=\varpi_c/\varpi_r$
\begin{equation}\label{e:thaupt9}
t=  9+72{\eta(q^2)\over\eta(q^3)}\,\left(\eta(q^6)\over\eta(q)\right)^5\,.
\end{equation}
The four cusps of $\Gamma_1(6)$ located  $\tau=0,1/2,1/3,+i\infty$ are
mapped to the values  $t=+\infty,1,0, 9$ respectively.

Plugging the expression for $t$ into the one for the periods
in~\eqref{e:complexp} and~\eqref{e:realp}, and performing the $q$-expansion
with {\tt Sage}~\cite{sage}, one easily identifies their
expression as a modular form using the {\tt Online encyclopedia of
  integer sequences}~\cite{oeis}  
\begin{equation}\label{e:periodq}
  \varpi_r={\pi\over\sqrt3}\,
  {\eta(q)^6\eta(q^6)\over\eta(q^2)^3\eta(q^3)^2}\,.
\end{equation}
%

\section{The elliptic dilogarithm for the sunset integral}\label{sec:PF}

The sunset integral is not annihilated by  the Picard-Fuchs
for our elliptic curve 
given in~\eqref{e:PFt2}, but satisfies an inhomogenous  Picard-Fuchs
equation 
\begin{equation}\label{e:PFinht}
  {d\over dt}\left(t(t-1)(t-9){d\over dt}\cJs^2(t)\right)+(t-3)\cJs^2(t)=-6\,.  
\end{equation}

This differential equation has been derived in~\cite{Laporta:2004rb} using the properties of the
Feynman integrals or in~\cite{MullerStach:2011ru} using cohomological methods.

\medskip
We provide here another derivation of this differential equation.
We rewrite the sunset integral in eq.~\eqref{e:Js} as 
\begin{equation}
  \cJs^2(t) =\int_0^\infty  \int_0^\infty \,{1\over (1+x+y)(1+{1\over x}+{1\over y}) -t}   \, {dxdy\over xy}
\end{equation}
 Since $x+x^{-1}\geq2$ for
$x>0$,  
\begin{equation}
 (1+x+y)(1+{1\over x}+{1\over y})\geq 9, \quad\forall x\geq 0, y\geq 0 
\end{equation}
This implies that the integral has a branch cut  for $t>9$ and the integral is analytic for
$t\in\IC\backslash [9,+\infty[$.
The value $t=9$ corresponds to the three-particle threshold when all the internal lines
are cut and the integral has a logarithmic singularity
$\cJs^2(t)\propto \log(9-t)$ for $t\sim 9$.

For $t<9$ we can perform the series expansion 
\begin{eqnarray}
  \label{e:series}
  \cJs^2(t) &=& \sum_{n\geq0}  I_{n} \, t^n\cr
I_n&:=& \int_0^\infty \int_0^\infty\,  ( (1+x+y)(1+x^{-1}+y^{-1}))^{-n-1} {dxdy\over xy}\cr
&=&{1\over n!^2}\,\int_{[0,\infty]^4}\, u^{n+1} \, v^{n+1}\, e^{-u (1+x+y)- v(1+x^{-1}+y^{-1})}
{dxdydudv\over xy u v}\cr
 &=&{1\over  4^{n-1}\,n!^2}\, \int_0^\infty x^{2n+1} K_0(x)^3\,
  dx\,.
\end{eqnarray}

Resumming the series we have the following integral representation in
terms of Bessel functions
\begin{equation}\label{e:Jbessel}
  \cJs^2(t)  = 4\int_0^\infty\,x I_0(\sqrt{t} x) \, K_0(x)^3\, dx \qquad  |t|<9\,.
\end{equation}
For the generic case of three different masses it is immediate to
derive using the same method, the following representation for the sunset integrals for $K^2< (m_1+m_2+m_3)^2$  
\begin{equation}
  \cIs^2(K^2,m_1,m_2,m_3)=4\pi^2\mu^2\int_0^\infty \,x
  I_0(\sqrt{K^2}x) \,\prod_{i=1}^3 K_0(m_ix) dx\,.
\end{equation}
These integrals  are  particular cases of the Bessel moment 
 integrals  discussed in~\cite{Bailey:2008ib,Broadhurst:2008mx}. 
Using the $t$ expansion in eq.~\eqref{e:series},  we 
obtain the following  action of the Picard-Fuchs
operator
\begin{multline}
     {d\over dt} \left( t(t-1)(t-9)  {d\cJs^2(t)\over dt}\right)+
     (t-3) \cJs^2(t)
=-3 I_0+9I_1\cr
+ \sum_{n\geq1} \, (n^2 I_{n-1}-(3+10n+10n^2)\,I_n+9(n+1)^2I_{n+1})\, t^n\,.
\end{multline}
It is easy to show that $I_1=(\cJs^2(0)-2)/3$ therefore  $-3I_0+9I_1=-6$.
Introducing 
\begin{equation}
\hat I_{n}:={1\over 4^{n-1}n!^2}\int_0^\infty \, x^{2n+1}\,K_0(x) K_1(x)^2\,dx\,,
\end{equation}
so that
$\hat I_1=1/3$ and $\hat I_n$ is convergent
for $n>0$.
Setting $v_n:=
\begin{pmatrix}
  I_n\cr \hat I_n
\end{pmatrix}$
a repeated use of integration by parts identities gives the matrix relation for $n>1$
\begin{eqnarray}
  v_{n+1}
&=&{1\over 9^{n}\,(n+1)!^2}\,
M(n)\cdot M(n-1)\cdots M(1)\,v_1\cr
M(n)&:=&\begin{pmatrix}
 (3+7n)(n+1) &-6n^2 \cr
 -2n(n+1) & 3n^2
\end{pmatrix}\,.
\end{eqnarray}
Using this expression one finds that for $n\geq1$
\begin{equation}
     n^2 v_{n-1}-(3+10n(1+n))\,v_n+9(n+1)^2v_{n+1} =
     \begin{pmatrix}
       0&0 \cr -{2n(n+1)\over 9^{n-1} n!^2}& {2n^2-1\over 9^{n-2} n!^2}
     \end{pmatrix} \, v_{n-1}\,.
\end{equation}
This implies that for $n\geq1$ 
\begin{equation}
n^2 I_{n-1}-(3+10n+10n^2)\,I_n+9(n+1)^2I_{n+1}=0
\end{equation}
which is one of the Ap\'ery-like recursions considered by Zagier
in~\cite{ZagierApery}.

Therefore for $|t|<9$ the sunset integral $\cJs^2(t)$ satisfies the
differential equation in eq.~\eqref{e:PFinht}.

\subsection{Solving the picard-fuchs equation}
\label{sec:solving-picard-fuchs}

We turn to the resolution of the differential equation~\eqref{e:PFinht} for the sunset
integral for $t\in\mathbb C\backslash [9,+\infty[$.

A particular solution to the inhomogeneous equation is given by a
direct application of the Wronskian method  to which we add solutions
to the homogeneous solution in $\varpi_c(t)\,\IC+\varpi_r(t)\,\IC$
(see~\cite[chap~XII]{Dieudonne} for a general reference on this method)
\begin{equation}\label{e:solpf1}
  -{\cJs^2(t)\over 6} =\alpha \,\varpi_c(t) +\beta\,\varpi_r(t) +{1\over12\pi}\,
  \int_0^t  (\varpi_c(t) \varpi_r(x) -\varpi_c(x) \varpi_r(t))\, dx\,  
\end{equation}
where $\alpha$ and $\beta$ are constant that will be determined
later. The action of the Picard-Fuchs operator on this expression gives
\begin{equation}
  \label{e:PFsol1}
  L_t \left( -{\cJs^2(t)\over 6} \right)= {t(t-1)(t-9)\over 12\pi}\,  (\varpi_c'(t) \varpi_r(t) -\varpi_c(t) \varpi_r'(t))\,.
\end{equation}
The quantity $W(t):= \varpi_c'(t) \varpi_r(t) -\varpi_c(t)
\varpi_r'(t)$ is the Wronskian for the second order differential
equation in~\eqref{e:PFt2}, therefore $t(t-1)(t-9) W(t)$ is a
  constant. Using the expression for the periods given in the previous
  section we find that this constant is  equal to
  $12\pi$ and therefore that $  L_t \left( -{\cJs^2(t)\over 6} \right)= 1$. This shows that $\cJs^2(t)$ satisfies the inhomogeneous equation~\eqref{e:PFinht}.

\medskip
Changing variables
from $x$ to $q:=\exp(2i \pi\tau(x))$ and using 
 the relation given by Maier in~\cite{haupt}
\begin{equation}
  \label{e:toToq}
  {1\over 2i \pi}{dt\over d\tau}={dt\over d\log q}= 72\,{\eta(q^2)^8\eta(q^3)^6\over\eta(q)^{10}}\,.  
\end{equation}
and finally using the expression in terms of $\eta$-function in~\eqref{e:periodq},
one can rewrite the integral as 
\begin{multline}
 -{\cJs^2(t)\over 6}=   \alpha\,\varpi_c(t)+\beta\, \varpi_r(t)+2\sqrt3\,\varpi_c(t)\, \int_{-1}^{q} \,
  {\eta(\hat q^6)\eta(\hat q^2)^5\eta(\hat q^3)^4\over \eta(\hat q)^4}\,d\log \hat q\cr
+{i\sqrt3\over\pi}\,\varpi_r(t)\,
 \,\int_{-1}^{q} \, {\eta(\hat q^6)\eta(\hat q^2)^5\eta(\hat q^3)^4\over
   \eta(\hat q)^4}\,\log \hat q\, d\log \hat q\,.
\end{multline}
Integrating by part the  second line leads to
\begin{equation}\label{e:Int1}
 -{\cJs^2(t)\over 6}=    \alpha\,\varpi_c(t)+\beta\, \varpi_r(t)
+\varpi_r(t)
 \,\int_{-1}^{q} \,L(\hat q) \,d\log \hat q\,,
\end{equation}
where we have introduced 
\begin{equation}\label{e:Lqdef}
 L(q):= -{i\sqrt3\over \pi}\int_{-1}^{q} \,
  {\eta(\hat q^6)\eta(\hat q^2)^5\eta(\hat q^3)^4\over \eta(\hat q)^4}\,d\log
  \hat q\,.
\end{equation}
For evaluating this expression we start by performing the
$q$-expansion of the integrand
\begin{equation}
 {\eta(q^6)\eta(q^2)^5\eta(q^3)^4\over \eta(q)^4}= \sum_{k\geq0} k^2
 \, \left({q^k\over 1+q^k+q^{2k}}+ {q^{2k}\over 1+q^{2k}+q^{4k}}\right)\,,
\end{equation}
which we integrate term by term using the result of the integral for $|X_0|, |X|\leq1$
\begin{equation}
  \int_{X_0}^X \left({x\over 1+x+x^2}+{x^2\over 1+x^2+x^4}\right)\,d\log x= {i\over\sqrt3}\, (f(X)-f(X_0))
\end{equation}
with 
\begin{equation}
  f(x):=\tanh^{-1}\left(x\over\zeta_6\right)-\tanh^{-1}\left(x\over\bar\zeta_6\right)\,,
\end{equation}
where $\zeta_6:=\exp(i\pi/3)=1/2+i\sqrt{3}/2$ is a sixth root of unity.
Since  $f(1)=-f(-1)=-i\pi/2$,  using that $\zeta(-1)=-1/12$, we have 
\begin{equation}
   {i\over\pi}\, \sum_{k\geq0} k \,  f((-1)^k)=-{i\over\pi}\,f(1)\,3\zeta(-1)={1\over8}\,.
\end{equation}
Therefore $L(q)$ in eq.~\eqref{e:Lqdef} has the $q$-expansion
\begin{equation}
L(q)={1\over 8}+ { 1\over\pi}\, \sum_{k\geq0} k \, \left(\tanh^{-1}\left(q^k\over\zeta_6\right)-\tanh^{-1}\left(q^k\over\bar\zeta_6\right)\right) \,.
\end{equation}
For evaluating the integral in~\eqref{e:Int1}, we integrate  this series term by term, using now that, for $|X_0|, |X|\leq1$, 
\begin{equation}
  \int_{X_0}^X \left(\tanh^{-1}\left(x\over\zeta_6\right)-\tanh^{-1}\left(x\over\bar\zeta_6\right)\right)\,
  d\log x= h(X)- h(X_0)  
\end{equation}
where
\begin{equation}\label{e:hdef}
  h(x):= \frac i2\,\left(\Li2(x\zeta_6^5)
    +\Li2(x\zeta_6^4)-\Li2(x\zeta_6^2) -\Li2(x\zeta_6)\right)\,.
\end{equation}
Since $h(1)=-h(-1)= {5\over 4\sqrt3}\,\cJs^2(0)$ obtained using~\eqref{e:I0li2}, we   therefore find that
\begin{equation}
  \int_{-1}^{q} L(\hat q)\, d\log\hat   q={1\over8} \,  \log(-q) 
  +{1\over\pi}\sum_{k\geq0} h(q^k)\,.
\end{equation}
We find that the solution to the inhomogeneous Picard-Fuchs equation
in eq.~\eqref{e:solpf1}
is given by
\begin{equation}
 -{\cJs^2(t)\over6} = \alpha'\,\varpi_c(t)+ \beta'\varpi_r(t)+
  {\varpi_r(t)\over\pi} \,  E_\suns(q) \,.
\end{equation}
where we introduced $E_\suns(q)$ defined by
\begin{eqnarray}\label{e:Hsunset}
    E_\suns(q) &:=&\sum_{n\geq0} h(q^n)- {h(1)\over2}\cr
\nn &=&- {1\over2i} \sum_{n\geq0}  \left(\Li2(q^n\zeta_6^5)
    +\Li2(q^n\zeta_6^4)-\Li2(q^n\zeta_6^2) -\Li2(q^n\zeta_6)\right)\\
&+&{1\over 4i} \, \left(\Li2(\zeta_6^5)
    +\Li2(\zeta_6^4)-\Li2(\zeta_6^2) -\Li2(\zeta_6)\right)\,.
\end{eqnarray}
Matching  the values of the sunset integral at $t=0$ and $t=1$ we find
that $\alpha'=i\pi/3$ and $\beta'=-i\pi/6$ 
\begin{equation}
 -{\cJs^2(t)\over 6}=-i{\pi\over6}\,(-2\varpi_c(t)+\varpi_r(t))
+ {\varpi_r(t)\over\pi} \,E_\suns(q) \,. \label{e:amplitude}
\end{equation}

This shows that  the one-mass sunset integral in two dimensions is expressed
in term of the elliptic dilogarithm $E_\suns(q)$ with the real period
$\varpi_r$ and complex periods
$\varpi_c$ respectively defined in~\eqref{e:realp} and~\eqref{e:complexp}.

\medskip
For obtaining the $q$-expansion of this elliptic dilogarithm we first
rewrite $h(x)$ in eq.~\eqref{e:hdef} as 
\begin{equation}
  h(x)=\sum_{n\geq1} \psi(n)\, {x^n\over
  n^2}
\end{equation}
where we have set
\begin{equation}
  \label{e:psidef}
  \psi(n):={(-1)^{n-1}\over\sqrt3}\, \left(\sin({n\pi\over3})+\sin({2n\pi\over3})\right)
\end{equation}
so that $\psi(n)$ is a character such
that $\psi(n)=1$ for $n=1\mod 6$, and $\psi(n)=-1$ for
$n=5\mod 6$,   and zero otherwise.
Then, 
\begin{equation}
E_\suns(q)= \sqrt3\, \sum_{n\geq0\atop k\geq1}
  {\psi(k)\over k^2}\, q^{nk} - {\sqrt3\over2}\, \sum_{k\geq1} {\psi(k)\over k^2}\,,
\end{equation}
summing over $n$ and using that $\psi(-k)=-\psi(k)$ we have
\begin{equation}
  E_\suns(q)={\sqrt3\over2}\,\sum_{k\in \mathbb Z\backslash \{0\}}{\psi(k)\over k^2}\, {1\over
  1-q^k}\,.
\end{equation}

We remark that the expression for this elliptic dilogarithm changes
sign under the transformation $q\to 1/q$. 

In section~\ref{sec:eis} we will show how this object can be derived from the motive
associated with the elliptic curve. See  in particular eq.~\eqref{e:ampmot} and the argument preceeding that equation.

Finally we notice that the invariance of the Hauptmodul $t$
in~\eqref{e:thaupt9} under $\Gamma_1(6)$ implies the following transformations of the elliptic dilogarithm sum
\begin{eqnarray}
 \tau'= \tau+1 &:& E_\suns(q')  = E_\suns(q) -{i \pi^2\over3}\cr
\tau'= {5\tau-1\over 6\tau-1} &:& E_\suns(q')  ={ E_\suns(q)\over -6\tau+1} -{i \pi^2\over3}\,
 {3\tau-1\over 6\tau-1}\cr
\tau'= {7\tau-3\over12\tau-5} &:& E_\suns(q')  = {E_\suns(q)\over 12\tau-5}\,.
\end{eqnarray}
In particular, the elliptic dilogarithm $E_\suns(q)$ is not modular invariant. 

 \section{The motive for the sunset graph: Hodge structures}\label{section:motiveHS}
 
The next three sections are devoted to the motivic calculation of the amplitude. 

Let $H=H_\Q$ be a finite dimensional $\Q$-vector space. A {\it pure
  Hodge structure} of weight $n$ on $H$ is a decreasing filtration
(Hodge filtration) $F^*H_\C$ on $H_\C= H_\Q\otimes \C$ such that
writing $\overline{F}^p$ for the complex conjugate filtration and
$H^{p,q}:= F^p\cap \overline{F}^q$, we have \eq{1}{H_\C = \bigoplus_p
  H^{p,n-p}.  }

A {\it mixed Hodge Structure} on $H$ is a pair of filtrations \newline\noindent
(i) $W_*H_\Q$ an increasing filtration (finite, separated, exhaustive) called the weight filtration.\newline\noindent
(ii) $F^*H_\C$ a decreasing filtration (finite, separated, exhaustive), the Hodge filtration, defined on $H_\C=H_\Q\otimes_\Q \C$. Note that the graded pieces for the weight filtration $gr^W_nH:= W_nH/W_{n-1}H$ inherit a Hodge filtration
\ml{2}{F^p(gr^W_nH)_\C = (F^pH_\C\cap W_{n,\C})/(F^pH_\C\cap W_{n-1,\C}) \cong \\
(F^pH_\C\cap W_{n,\C}+W_{n-1,\C})/W_{n-1,\C}.
}
For a mixed Hodge structure we require that with this filtration $gr^W_nH$ should be a pure Hodge structure of weight $n$ for all $n$. \nnn
{\bf Fundamental Example.} The notion of mixed Hodge structure has
extremely wide application. All Betti groups of algebraic varieties
carry canonical and functorial mixed Hodge structures~\cite{deligneHII,deligneHIII,bertin}. Here ``Betti group'' means Betti cohomology, relative cohomology, cohomology with compact supports, cohomology of diagrams, homology, etc. ``Functorial'' means functorial for algebraic maps. 
\begin{exs}\label{exs2.2} (i) $H^1(C,\Q)$ is a pure Hodge structure of weight $1$, where $C$ is a compact Riemann surface. This is just the statement that cohomology classes with $\C$-coefficients can be decomposed into types $(1,0)$ and $(0,1)$. More generally, the Hodge decomposition for cohomology in degree $p$ for any smooth projective variety $X$ and any degree $p$, says that $H^p(X,\Q)$ is a pure Hodge structure of weight $p$. \newline\noindent
(ii) $\Q(n)$ is the pure Hodge structure of weight $-2n$ with underlying vector space a $1$-dimensional $\Q$-vector space and Hodge structure $\Q(n)_\C = \Q(n)_\C^{-n,-n}$.
\end{exs}

The categories of pure and of mixed Hodge structures are abelian. In
the mixed case this is a real surprise, because categories of filtered
vector spaces do not tend to be abelian. Let $V$ and $W$ be filtered
vector spaces, and let $\phi: V \to W$ be a linear map compatible with
the filtrations. The image of $\phi$, $\Imm(\phi)$, has in general two
possible filtrations; one can take $fil_n\Imm(\phi)$ to be either the
image of $fil_nV$ or else the intersection $\Imm(\phi)\cap
fil_nW$. Technically, the image and coimage differ in the category of
filtered vector spaces, and this prevents the category from being
abelian. In the case of mixed Hodge structures, the two filtrations can be seen to coincide. Of particular importance for us is that an exact sequence of mixed Hodge structures induces exact sequences on the level of $W_n$ and on the level of $F^p$ for all possible $n, p$. 

Categories of Hodge structures carry natural tensor product structures induced by ordinary tensor product on the underlying vector spaces with the usual notion of tensor products of filtrations. For example if $H$ is a Hodge structure, then $H(1):= H\otimes \Q(1)$ is the Hodge structure with underlying vector space $H$ with weight and Hodge structures given by $W_n(H(1)) = W_{n+2}H$ and $F^pH(1)_\C = F^{p+1}H$.

Although we do not explicitly refer to polarizations, we will assume the weight-graded pieces of our Hodge structures admit polarizations, so the category of pure Hodge structures is semi-simple.

\begin{ex} We shall need certain extensions of Hodge structures; in particular we shall need to understand $\text{Ext}^1_{MHS}(\Q(0),H)$ for a given Hodge structure $H$. Let 
\eq{27}{0 \to H \to \E \to \Q(0) \to 0
}
be such an extension. The interesting case is when $H = W_{<0}H$, in other words when the weights of $H$ are $<0$, so we assume that. (Exercise: show in general that the inclusion $W_{<0}H \subset H$ induces an isomorphism $\text{Ext}^1_{MHS}(\Q(0), H) \cong \text{Ext}^1_{MHS}(\Q(0), W_{<0}H)$.) This then determines the weight filtration on $\E$, namely $W_i\E=W_iH$ for $i<0$ and $W_0\E=\E$. It remains to understand the Hodge filtration. Let $1\in \Q(0)$ be a basis, and let $s: \Q(0) \to \E_\Q$ be a vector space splitting, so we can identify $\E_\Q=H_\Q\oplus \Q\cdot s(1)$ as a vector space. For the Hodge filtration we have
\eq{28}{0 \to F^pH_\C \to F^p\E_\C \to F^p\Q(0)_\C\to 0
}
from which one sees that $F^p\E_\C = F^pH_\C$ for $p\ge 1$ and
$F^p\E_\C = F^pH_\C+F^0\E_\C$ for $p\le 0$. Thus, the only variable is
$F^0\E_\C$. Since $F^0\E_\C \surj \Q(0)_\C$ there exists $h\in H_\C$
with $h-s(1) \in F^0\E_\C$. Clearly $h$ is well-defined up to the choice of splitting $s$ and an element in $F^0H_\C$. 
It follows that
\eq{29}{\text{Ext}^1_{MHS}(\Q(0),H)\cong H_\C/(H_\Q+F^0H_\C).
}
\end{ex}

To understand how Hodge structures are related to amplitudes, consider the dual $M^\vee$ of a Hodge structure $M$. We have
\eq{30}{M^\vee := \text{Hom}(M, \Q);\quad W_nM^\vee = (W_{-n-1}M)^\perp;\quad F^pM^\vee_\C = (F^{-p+1}M_\C)^\perp.
}
The dual of \eqref{27} reads
\eq{}{0 \to \Q(0) \to \E^\vee \to H^\vee \to 0
}
For convenience we write $M=\E^\vee$ so we have an inclusion of Hodge structures $\Q(0) \inj M$. Let $0=F^{p+1}M_\C \subset F^pM_\C\neq (0)$ be the smallest level of the Hodge filtration, and suppose we are given $\omega \in F^pM_\C$. Dualizing we get $M^\vee \surj \Q(0)$. We choose a splitting $s: \Q(0) \inj M^\vee$. The period is then
\eq{31a}{\langle\omega,s(1)\rangle \in \C.  
}
Because the Hodge filtration is not defined over $\Q$, the period
 is not rational.

The period in~\eqref{31a} is the Feynman integral or amplitude 
for instance in eq.~\eqref{e:Js} for the one mass sunset graph.
In the precise relation to the Feynman integral or amplitude, 
one needs to pay attention to the important fact  that the period in~\eqref{31a} depends on the choice
of splitting $s$. This may seem odd to the physicist, who doesn't
think of amplitude calculations as involving any choices. Imagine,
however, that the amplitude depends on external momenta. The resulting
function of momenta is generally multiple-valued. Concretely, as
momenta vary, the chain of integration, which is typically the
positive quadrant in some $\R^{4n}$, has to deform to avoid the polar
locus of the integrand.
  When the momenta return to their original
values, the chain of integration may differ from the original
chain. Thus the choice of section $s$ is a reflection of the
multiple-valued nature of the amplitude. This is a crucial point in
identifying periods with Feynman integrals.  
\begin{ex}Let $E$ be an elliptic curve and suppose $M$ arises as an
  extension  
\eq{32}{0 \to \Q(0) \to M \to H^1(E,\Q(-1)) \to 0. 
}
(We will construct a family of such extensions in a minute.) The dual
sequence\footnote{Note that $H^1(E,\Q)^\vee \cong H^1(E,\Q(1))$} is
\eq{33a}{0 \to H^1(E,\Q(2)) \to M^\vee \to \Q(0) \to 0.
}
We have $F^2M_\C \cong F^2H^1(E,\C)(-1) \cong F^1H^1(E,\C) =
\Gamma(E,\Omega^1)$ so the choice of a holomorphic $1$-form $\omega$
determines an element in the smallest Hodge filtration piece
$F^2M_\C$. In this case, we can understand the relation between the
amplitude~\eqref{31a} and the extension class
\eqref{35} as follows. We have $F^0H^1(E,\C)(2)=F^2H^1(E,\C)=(0)$ so
$F^0M^\vee_\C \cong F^0\C(0)$ and there is a canonical lift $s_F$ of
$1\in \Q(0)$ to $F^0M^\vee_\C$. The class of $M^\vee \in
Ext^1_{MHS}(\Q(0),H^1(E,\Q(2)))\cong H^1(E,\C(2))/H^1(E,\Q(2))$ is
given by $-s_F+s(1)$. The pairing $\langle\rangle:M\times M^\vee \to
\Q=\Q(0)$ can be viewed as a pairing of Hodge structures, so $\langle
\omega, s_F\rangle \in F^2\C(0) = (0)$. With care, the $\Q$ can be
replaced by $\Z$, and the amplitude becomes a map  
\ml{34a}{\langle\omega,?\rangle : Ext^1_{MHS}(\Z(0),H^1(E,\Z(2))) \cong \\
H^1(E,\C)/H^1(E,\Z(2)) \to \C/\langle\omega,H^1(E,\Z(2))\rangle.
}
(Here $\langle\omega,H^1(E,\Z(2))\rangle \subset \C$ is the lattice spanned by the image in $\C\cong H^2(E,\C(2))$ under cup product.) We have seen in section~\ref{sec:PF} for the sunset all equal masses case with mass $m$
and external momentum $K$, the amplitude is a solution of an
inhomogeneous Picard-Fuchs equation in eq.~\eqref{e:PFinht} for a family of elliptic curves,
where the parameter of the family is $t:= K^2/m^2$.
 The quotient
$\langle\omega,H^1(E,\Z(2))\rangle$ reflects the fact that the
inhomogeneous solution is multiple-valued with variation given by a
lattice of periods of the elliptic curves.  

\medskip
The amplitude is closely related to the {\it regulator} in arithmetic
algebraic geometry~\cite{soule}. Let $conj: M_\C \to M_\C$ be the real
involution which is the identity on $M_\R$ and satisfies
$conj(c\,m)=\bar c\,m$ for $c\in \C$ and $m\in M_\R$. With notation as
above, the extension class $s(1)-s_F\in H^1(E,\C)$ is well-defined
up to an element in $H^1(E,\Q(2))$ (i.e. the choice of $s(1)$). Since
$conj$ is the identity on $H^1(E,\Q(2))$, the projection onto the
minus eigenspace $(s(1)-s_F)^{conj=-1}$ is canonically defined. The
regulator is then 
\eq{35}{\langle\omega,(s(1)-s_F)^{conj=-1}\rangle \in \C. 
}
\end{ex}

Let $E\subset \P^2$ be the elliptic curve defined by the equal mass
sunset equation~\eqref{e:Asunsett}. Write $X, Y, Z$ for the
homogeneous coordinates, and let $x=X/Z, y=Y/Z$. We construct an
extension of type~\eqref{33a} as follows. The curve passes through the
points $(1,0,0), (0,1,0), (0,0,1)$. Let $\rho: P \to \P^2$ be the blowup of $\P^2$ at these three points. The inverse image of the coordinate triangle in $P$ is a hexigon we call $\mathfrak h$. The union of the three exceptional divisors will be denoted $D=D_1\cup D_2\cup D_3$. The elliptic curve lifts to $E\subset P$, meeting each edge of $\mathfrak h$ in a single point. The picture is figure~\ref{fig:sunsetfig3}. 

\begin{lem}The localization sequence
\eq{36e}{0 \to H^2(P,\Q(1))/\Q\cdot [E] \to H^2(P-E,\Q(1)) \to H^1(E,\Q) \to 0
}
is exact and canonically split as a sequence of $\Q$-Hodge structures. Here $[E] \in H^2(P, \Q(1))$ is the divisor class. 
\end{lem}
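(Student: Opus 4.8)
The plan is to analyze the Gysin (localization) sequence for the open immersion $P - E \hookrightarrow P$ with closed complement $E$, and then use weight considerations to get both exactness and the canonical splitting. First I would recall that for a smooth projective surface $P$ and a smooth curve $E \subset P$, the localization sequence in Betti cohomology reads
\begin{equation*}
H^1(P,\Q) \to H^1(P-E,\Q) \to H^0(E,\Q(-1)) \xrightarrow{\text{Gysin}} H^2(P,\Q) \to H^2(P-E,\Q) \to H^1(E,\Q(-1)) \to H^3(P,\Q).
\end{equation*}
Since $P$ is a rational surface (it is $\P^2$ blown up at three points), $H^1(P,\Q)=H^3(P,\Q)=0$ and $H^1(P-E,\Q)=0$ as well; moreover the Gysin map $H^0(E,\Q(-1)) \to H^2(P,\Q)$ sends the generator to the cycle class $[E]$, which is nonzero (e.g. $E$ is ample, or just compute self-intersection), hence this map is injective. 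Twisting by $\Q(1)$ throughout, the six-term sequence collapses to the short exact sequence \eqref{36e}, with $H^2(P,\Q(1))/\Q\cdot[E]$ being exactly the cokernel of the Gysin map. This establishes exactness.

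Next I would address the splitting. The key observation is a weight/Hodge-type calculation: $H^2(P-E,\Q(1))$ is a mixed Hodge structure with weights in $\{0,1\}$ (it is an open smooth surface), and the weight filtration of the sequence \eqref{36e} is precisely $W_0 = H^2(P,\Q(1))/\Q\cdot[E]$ and $\mathrm{gr}^W_1 = H^1(E,\Q)$. Since $H^1(E,\Q)$ is pure of weight $1$ and polarizable, and $H^2(P,\Q(1))/\Q\cdot[E]$ is pure of weight $0$ — it is a sub-quotient of $H^2(P,\Q(1))$, which for the rational surface $P$ is a direct sum of copies of $\Q(0)$ (the Picard group is generated by algebraic classes: the hyperplane class and the three exceptional classes) — the extension is an extension of a weight-$1$ Hodge structure by a weight-$0$ Hodge structure. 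One then argues that $\mathrm{Ext}^1_{MHS}$ between a pure weight-$1$ Hodge structure and a pure weight-$0$ Hodge structure vanishes: indeed $\mathrm{Ext}^1_{MHS}(A,B)$ for $A$ pure of weight $a$ and $B$ pure of weight $b$ is $0$ whenever $a \le b$ (the extension group of mixed Hodge structures is computed by $\mathrm{Hom}$ and a quotient of $\mathrm{Hom}(A_\C, B_\C)$ by $F^0 + A_\Q$-type terms, all of which vanish in the wrong-weight range; more precisely one uses that $\mathrm{Ext}^1$ in the category of MHS is right-exact and reduces to the case $A = \Q(0)$ after a Tate twist, where $\mathrm{Ext}^1(\Q(0), \Q(n)) = 0$ for $n \le 0$). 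Since our weights satisfy $0 \le 1$ — wait, one must be careful about the direction: we want $\mathrm{Ext}^1(\text{weight }1, \text{weight }0)$, and indeed for $A$ of weight $1$, $B$ of weight $0$, a Tate twist reduces to $\mathrm{Ext}^1(\Q(0), B(1)\otimes \text{something})$ of weight $-1$, which vanishes. So the sequence splits, and since $\mathrm{Hom}_{MHS}(H^1(E,\Q), H^2(P,\Q(1))/\Q\cdot[E]) = 0$ (again by weight reasons, a morphism from weight $1$ to weight $0$ is zero), the splitting is unique, hence canonical.

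The main obstacle — really the only subtle point — is pinning down that $H^2(P,\Q(1))/\Q\cdot[E]$ is pure of weight zero and of Tate type, and that the weight filtration on $H^2(P-E)$ is as claimed; this is where one genuinely uses that $P$ is a rational (hence has $H^2$ of Tate type) and that $E$ is a smooth curve so that the boundary contributes only in weight $1$. I would handle this by invoking Deligne's theory \cite{deligneHII,deligneHIII}: for $U = P - E$ with $P$ smooth projective and $E$ a smooth divisor, $W_1 H^2(U) / W_0 H^2(U) \hookrightarrow H^1(E,\Q(-1))$ and $W_0 H^2(U) = \mathrm{image}(H^2(P) \to H^2(U))$, and here the inclusion $W_1/W_0 \hookrightarrow H^1(E,\Q(-1))$ is an equality because the next term $H^3(P,\Q)$ vanishes. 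Combined with the semisimplicity of polarizable pure Hodge structures, the vanishing of the relevant $\mathrm{Ext}^1$ and $\mathrm{Hom}$ groups is then immediate, and the proof concludes. One should also note explicitly, for later use, that the canonical splitting is compatible with the algebraic de Rham and étale realizations, though that is not needed for the statement as phrased.
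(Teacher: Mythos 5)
The exactness part of your argument is fine and essentially matches the paper: use the Gysin triangle for $E \subset P$, observe $H^1(P)=H^3(P)=0$ since $P$ is rational, and note the cycle class map $\Q(-1)=H^0(E,\Q(-1))\to H^2(P,\Q)$ is injective because $[E]\neq 0$.

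The splitting argument, however, has a genuine gap. You try to kill the extension class by pure weight considerations, claiming that $\mathrm{Ext}^1_{MHS}(\text{wt }1,\text{wt }0)$ vanishes because after a Tate twist one lands in $\mathrm{Ext}^1_{MHS}(\Q(0),\,H)$ with $H$ of weight $-1$. But that group does \emph{not} vanish: the correct general statement is $\mathrm{Ext}^1_{MHS}(\Q(0),H)=0$ only when $H$ has weights $\ge 0$, whereas for $H$ pure of weight $-1$ one gets $H_\C/(F^0H_\C+H_\Q)$, an intermediate Jacobian. In the present case the relevant group is precisely
$$\mathrm{Ext}^1_{MHS}\bigl(H^1(E,\Q),\Q(0)\bigr)\;\cong\;E(\C)\otimes\Q,$$
which is large and nontrivial. (Your own stated criterion, ``$a\le b$'', already fails here with $a=1$, $b=0$; the ``twist to weight $-1$ which vanishes'' step in your argument is simply incorrect.) So weight-purity alone cannot give the splitting, and indeed the lemma would be false for a generic cubic through the three coordinate points; it holds because of the special arithmetic of the sunset curve.

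What the paper actually does is geometric: it dualizes the sequence, maps it into a bigger diagram built from the auxiliary line $L$ through a flex point $q\in E$ and the four torsion points $S=\{q,(1{:}0{:}0),(0{:}1{:}0),(0{:}0{:}1)\}$, and uses vanishing of the right-hand vertical arrow to show that the extension class lands in the subgroup of $E(\C)\otimes\Q$ generated by degree-zero $0$-cycles supported on $S$. Since all points of $S$ are torsion on $E$, that subgroup is killed by $\otimes\Q$, and the extension splits. Your observation that the splitting, once it exists, is unique (hence canonical) because $\mathrm{Hom}_{MHS}(H^1(E,\Q),\,\text{wt }0\text{ Tate})=0$ is correct and is the same argument used in the paper; but the existence of the splitting requires the torsion-point input that your proof is missing.
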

\begin{proof} Let $q\in E$ be a point of order $3$, e.g. $x=0, y=-1$. Then $q$ is a flex point, so there exists a line $L\subset \P^2$ with $L\cap E = \{q\}$. Write $S:=\{q,(1,0,0), (0,1,0),(0,0,1)\}$. Points in $S$ are torsion on $E$. It is convenient to work with the dual of~\eqref{36e} which is the top row of the following diagram \minCDarrowwidth.1cm \begin{tiny}
\eq{37}{\begin{CD}0 @>>> H^1(E)(1) @>>> H^2(P,E)(1) @>>> H^2(P)^{(0)}(1) @>>> 0 \\
@. @VVV @VV a V @VV 0 V \\
0 @>>> H^1(E-S)(1) @>b >> H^2(P-D-L,E-S)(1) @>>> H^2(P-D-L)(1) @>>> 0
\end{CD}
}
\end{tiny}
(Here $H^2(P)^{(0)}:= \ker(H^2(P) \to H^2(E)$.) The right hand vertical arrow is zero, so $\text{Image}(a) \subset \text{Image}(b)$. Thus, the top sequence splits after pushout along $H^1(E) \to H^1(E-S)$.  It follows that the dual sequence~\eqref{36e} arises from a map of vector spaces
\eq{38}{H^2(P,\Q(1))/\Q\cdot [E] \to Ext^1_{\text{MHS}}(H^1(E,\Q),\Q(0)) \cong E(\C)\otimes \Q,
}
and that the image of this map lies in the subgroup of $E(\C)\otimes \Q$ spanned by $0$-cycles of degree $0$ supported on $S$. These $0$-cycles are all torsion, so the extension splits. The splitting is canonical because two distinct splittings would differ by a map of Hodge structures from $H^1(E) \to H^2(P,\Q(1))/\Q\cdot [E]$ and no such map exists. 
\end{proof}

As a consequence of the lemma, we have a canonical inclusion $\iota: H^1(E,\Q(-1)) \subset H^2(P-E,\Q)$. Consider the diagram where $M$ is defined by pullback \minCDarrowwidth.1cm \begin{small}
\eq{39e}{\begin{CD} 0 @>>> H^1(\mathfrak h - E\cap \mathfrak h) @>>> H^2(P-E, \mathfrak h - E\cap \mathfrak h) @>>> H^2(P-E) @>>> 0 \\
@. @| @AAA @AA\iota A \\
0 @>>> H^1(\mathfrak h - E\cap \mathfrak h) @>>> M @>>> H^1(E,\Q(-1)) @>>> 0
\end{CD}
}
\end{small}
Dualizing the bottom sequence yields an extension of the form~\eqref{33a}. 

We continue to assume $E$ is the elliptic curve defined by~\eqref{e:Asunsett}, and we write $\omega$ for the meromorphic two-form which is the integrand in~\eqref{e:Js}. 
\begin{lem}The Feynman amplitude coincides with the amplitude $\langle
  \omega,s(1)\rangle$~\eqref{31a} associated to $M^\vee$ where $M$ is
  as above. (Here, of course, ``coincide'' means up to the ambiguity associated to the choice of section $s$.)
\end{lem}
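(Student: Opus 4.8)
The plan is to realise the Feynman integral~\eqref{e:Js} as the de Rham--Betti period pairing of the class of $\omega$ against the lifted domain of integration, then to check that this class lies in the lowest Hodge piece of the sub-mixed-Hodge-structure $M$ of~\eqref{39e}, while the lifted domain represents a $\Q$-linear section of the surjection $M^\vee\twoheadrightarrow\Q(0)$ dual to~\eqref{33a}. Throughout write $E\subset P$ for the strict transform of $\cEs$, $\mathfrak h=\rho^{-1}(\Delta)$ for the hexagon over the coordinate triangle $\Delta$, and $\mathfrak h^0:=\mathfrak h-E\cap\mathfrak h$.

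First I would set up the Betti side. The positive simplex $\cD$ of~\eqref{e:Ddef} meets $\cEs$ only along the torsion points~\eqref{e:6points}, all of which lie on the coordinate triangle $\Delta$; hence after the blowup $\rho$ the closure of $\cD$ lifts to a semialgebraic $2$-chain $\sigma\subset P$ with $\partial\sigma\subset\mathfrak h$, and an inspection of the blowup charts shows that $\partial\sigma$ is disjoint from the six points $E\cap\mathfrak h$ (the three points where $E$ meets the exceptional divisors and the three points where it meets the strict transforms of the lines of $\Delta$ all sit off the ``positive'' hexagonal loop traced by $\partial\sigma$). Because the integral~\eqref{e:Js} is finite --- which is exactly what was established in section~\ref{sec:PF} --- the chain $\sigma$ defines a class $[\sigma]\in H_2(P-E,\mathfrak h^0;\Q)=H^2(P-E,\mathfrak h^0;\Q)^\vee$, and the Feynman amplitude equals the period pairing $\langle[\omega],[\sigma]\rangle$.

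Next I would locate $[\omega]$ inside $M$. A local computation in the blowup charts shows that $\omega=dx\wedge dy/\As$, pulled back to $P$, has no pole along the exceptional divisors, so on $P$ it is a global closed holomorphic $2$-form of type $(2,0)$ on $P-E$; its restriction to the curve $\mathfrak h^0$ vanishes for dimension reasons, so it determines a class in $F^2H^2_{\mathrm{dR}}(P-E,\mathfrak h^0)$. Since $\mathfrak h^0$ is a curve one has $F^2H^1(\mathfrak h^0)=0$, and since $P$ is a rational surface $H^2(P)$ is of pure Hodge--Tate type $(1,1)$, so $F^2H^2(P)=0$. Feeding this into the exactness of $F^\bullet$ on the top row of~\eqref{39e} and into the canonical splitting of~\eqref{36e} gives $F^2H^2(P-E,\mathfrak h^0)_\C\cong F^2H^2(P-E)_\C\cong F^2\bigl(\iota(H^1(E,\C)(-1))\bigr)\cong\Gamma(E,\Omega^1)$, one-dimensional, represented by $\mathrm{Res}_E\omega=dx/(\partial\As/\partial y)|_E$, the holomorphic form of~\eqref{e:realpint}. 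Since $M$ in~\eqref{39e} is by construction the preimage of $\iota(H^1(E,\Q)(-1))$ under $H^2(P-E,\mathfrak h^0)\to H^2(P-E)$, this places $[\omega]$ in $F^2M_\C$, the lowest Hodge filtration piece of $M$ --- exactly the piece from which periods are taken in~\eqref{31a}, as in the discussion following~\eqref{33a}.

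Finally I would identify $[\sigma]$ with a section and conclude. Dually, $M^\vee$ is the quotient of $H_2(P-E,\mathfrak h^0)$ corresponding to $M\subset H^2(P-E,\mathfrak h^0)$, and the surjection $M^\vee\twoheadrightarrow\Q(0)$ of~\eqref{33a} is the factorisation through $M^\vee$ of the connecting homomorphism $\partial\colon H_2(P-E,\mathfrak h^0)\to H_1(\mathfrak h^0)\cong\Q(0)$, because the distinguished sub $\Q(0)=H^1(\mathfrak h^0)\subset M$ is $\Q$-linearly dual to $H_1(\mathfrak h^0)$. Now $\partial[\sigma]=[\partial\sigma]$ is the oriented boundary hexagon of the positive simplex, which wraps once around $\mathfrak h$ and therefore generates $H_1(\mathfrak h^0,\Z)$; so the image $\bar\sigma$ of $[\sigma]$ in $M^\vee$ maps to a generator of $\Q(0)$, which we take to be $1$, and extending $\bar\sigma$ to a $\Q$-linear section $s\colon\Q(0)\to M^\vee$ --- the only requirement in~\eqref{31a} --- gives $s(1)=\bar\sigma$. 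Then $\langle[\omega],[\sigma]\rangle=\langle[\omega],\bar\sigma\rangle=\langle\omega,s(1)\rangle$ is precisely the period~\eqref{31a} attached to $M^\vee$, which is the assertion of the lemma; the clause ``up to the choice of $s$'' is the monodromy ambiguity in $\sigma$: transporting $t=K^2/m^2$ around a loop changes $\sigma$ by a class in $\ker(M^\vee\to\Q(0))=H^1(E,\Z(2))\cong H_1(E,\Z)$, hence changes the amplitude by a point of the period lattice $\langle\omega,H^1(E,\Z(2))\rangle$, in agreement with~\eqref{34a} and with the closed form~\eqref{e:amplitude}. The main obstacle is the Hodge-theoretic bookkeeping of the third paragraph --- confirming that $[\omega]$ really enters the distinguished subspace $M_\C$ in its lowest Hodge degree, including the chart check that $\omega$ is regular on the exceptional divisors of $\rho$ --- together with the topological verification that $\partial\sigma$ represents a generator of $H_1(\mathfrak h^0)$; both rely on the explicit geometry of the blowup.
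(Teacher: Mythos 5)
Your proof follows the same structure as the paper's: you locate $\omega$ in $F^2M_\C$ via its regularity on the exceptional divisors together with the canonical splitting of the previous lemma, you realise the domain of integration as a relative $2$-chain whose boundary generates $H_1(\mathfrak h^0,\Q)\cong\Q(0)$ and hence supplies the splitting $s$ of the top row of~\eqref{40a}, and you identify the resulting pairing with the period~\eqref{31a}. The only differences are that you spell out details the paper leaves as exercises --- the chart check near the blown-up vertices that the proper transform of $\overline{\cD}$ is disjoint from $E$, and the explicit identification of the ambiguity in $s$ with the monodromy/period lattice of~\eqref{34a} --- which is elaboration rather than deviation.
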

\begin{proof}$\omega$ is a two-form on $P$ with a simple pole on $E$ and no other pole. (This is not quite obvious. The form is given on $\P^2$ and then pulled back to $P$. It is a worthwhile exercise to check that the pullback does not acquire poles on the exceptional divisors $D_1, D_2, D_3$.) It therefore represents a class in $F^2H^2(P-E,\C)$. From the previous lemma we have $H^2(P-E,\Q) \cong H^1(E,\Q(-1))\oplus \Q(-1)^3$, so $F^2H^2(P-E,\C) = F^2H^1(E,\C(-1))= F^2M_\C$. (The last identity follows from the bottom line in~\eqref{39e}.) The dual of~\eqref{39e} looks like \minCDarrowwidth.1cm \begin{small}
\eq{40a}{\begin{CD} 0 @>>> H_2(P-E) @>>> H_2(P-E, \mathfrak h - E\cap \mathfrak h) @>>> H_1(\mathfrak h - E\cap \mathfrak h) @>>> 0 \\
@. @VV\iota^\vee V @VVV  @| \\
0 @>>> H^1(E,\Q(2)) @>>> M^\vee @>>> H_1(\mathfrak h - E\cap \mathfrak h) @>>> 0.
\end{CD}
}
\end{small}
The chain of integration in~\eqref{e:Js} can be viewed as a two-chain
on $P-E$ with boundary in $\mathfrak h-E\cap \mathfrak h$. This
boundary is the loop which represents the generator of $H_1(\mathfrak
h - E\cap \mathfrak h) = \Q(0)$, so the two-chain gives a $\Q$-vector
space splitting of the top line in~\eqref{40a}. Since $\omega \in M_\C$ we can compute the pairing by pushing the chain down to $M^\vee$. But this is how the amplitude~\eqref{31a} is defined. 
\end{proof}

\begin{figure}[ht]
  \centering
  \includegraphics[width=8cm]{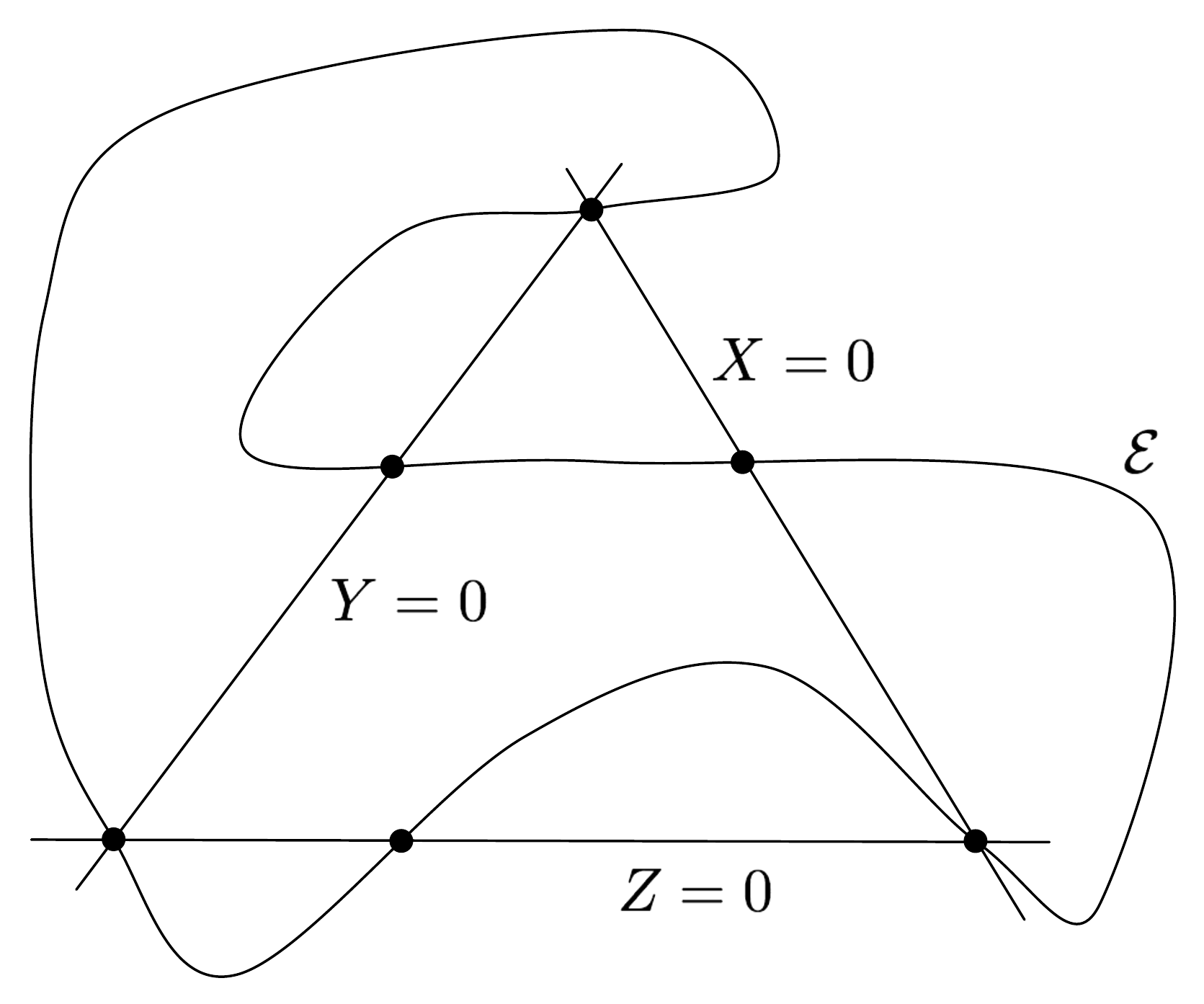}
  \caption{The sunset graph $E$ meets the coordinate triangle in six points; three corners and three other points. In the equal mass case, the difference of any two points is six-torsion on the curve.}
  \label{fig:sunsetfig2}
\end{figure}

\begin{figure}[ht]
  \centering
  \includegraphics[width=8cm]{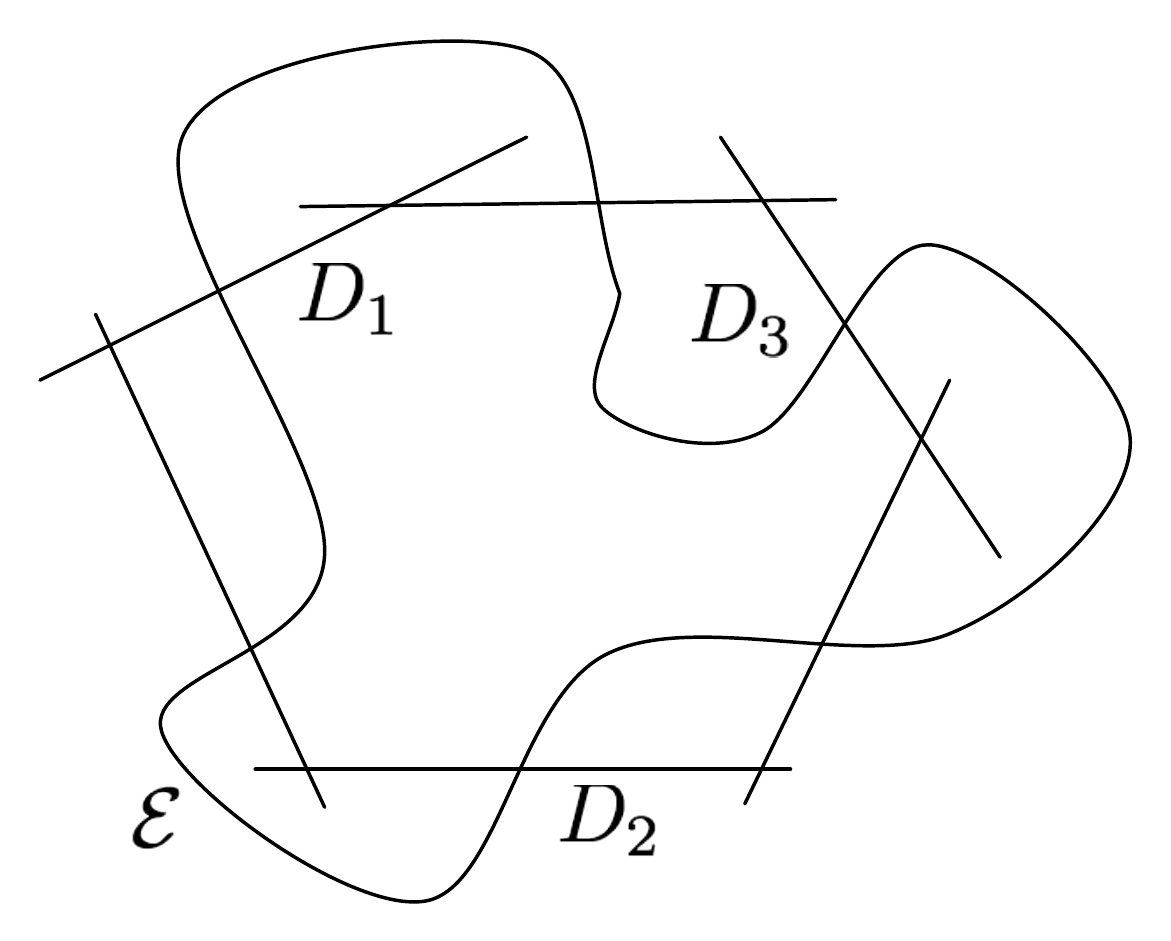}
  \caption{After blowup, the coordinate triangle becomes a hexigon in $P$ with three new divisors $D_i$. $E$ now meets each of the six divisors in one point.}
  \label{fig:sunsetfig3}
\end{figure}

Finally in this section we discuss briefly the notion of a family of Hodge strutures associated to a family of elliptic curves $f:\sE \to X$ where $X$ is an
open curve. We write $\sE_t$ for the individual
elliptic fibres. For convenience, we write $\sV = \sV_\Q:=
R^1f_*\Q_\sE(2)$ and $\sV_\C := \sV\otimes \C$. These are rank two
local systems of $\Q$ (resp. $\C$) vector spaces on $X$. We can, of course, also define the $\Z$-local system $\sV_\Z = R^1f_*\Z_\sE(2)$.

The
corresponding coherent analytic sheaf $\sV\otimes_\Q \sO_X = \sV_\C
\otimes_\C \sO_X$ admits a connection, and we may define a twisted de
Rham complex which is an exact sequence of sheaves 
\eq{36a}{0 \to \sV_\C \to \sV_\C\otimes_\C \sO_X \xrightarrow{d} \sV_\C\otimes_\C \Omega^1_X \to 0.
}

The fact that $\sV$ is a family of Hodge structures translates into a Hodge filtration $f_*\Omega^1_{\sE/X}\subset \sV\otimes \sO_X$, and in our modular case we will exhibit a section 
\eq{37a}{eis \in \Gamma(X, f_*\Omega^1_{\sE/X}\otimes\Omega^1_X)\subset \sV\otimes\Omega^1_X.  
}
\begin{remark}
The Eisenstein element $eis$ is associated to a modular family and a $\Q$-valued function on the cusps. As such, it is rather special. However, the extension of local systems which it defines via pullback \minCDarrowwidth.1cm
\eq{37d}{\begin{CD}0 @>>> \sV_\C @>>> \sV\otimes\sO_X @> d>> \sV\otimes \Omega^1_X @>>> 0 \\
@. @| @AAA @AA 1 \mapsto eis A \\
0 @>>> \sV_\C @>>> \sW @>>> \C_X @>>>0
\end{CD}
}
underlies a family of extensions of Hodge structures. If we tensor this pullback extension with $\sO_X$, we get an exact sequence of vector bundles with integrable connections 
\eq{38d}{0 \to \V \to \W \to \sO_X \to  0
}
The fibres of \eqref{38d} are of the form \eqref{33a}. 

The point is that some analogue of this sequence is available quite generally. It does not depend on the special modular nature of the sunset equal mass case. The amplitude is an algebraic section of $\W$ so the differential equation satisfied by the amplitude will be an inhomogeneous equation associated to the homogeneous equation on $\V$ with source term a rational function on $X$. 

If, for example, we consider the sunset graph with unequal masses, we get a somewhat more complicated picture. The extension of coherent sheaves with connections becomes
\eq{}{0 \to \V \to \W \to \E \to 0
}
where $\E$ itself contains Kummer extensions of the form 
\eq{}{0 \to \sO_X \to \F \to \sO_X \to 0
}
(Kummer means one has a section of $\F$, $e\mapsto 1$, and a rational function $f$ on $X$ such that 
$e+\log f\cdot 1$ is horizontal for the connection.) The source term in the inhomogeneous equation for the amplitude in this case will involve $\log f$'s. 
\end{remark}
\section{The motive for the sunset graph: motivic cohomology}\label{sec:motive}
Consider a larger diagram of sheaves
\eq{38a}{\begin{CD}
0 @>>> \sV_\Z @>>> \sV\otimes \sO_X @>>>  (\sV\otimes \sO_X)/\sV_\Z @>>> 0 \\
@. @VVV @| @VV \delta V \\
0 @>>> \sV_\C @>>> \sV\otimes \sO_X @>d>>  \sV\otimes \Omega^1_X @>>> 0
\end{CD}
}
The sheaf $(\sV\otimes \sO_X)/\sV_\Z$ is the sheaf of germs of analytic sections of a fibre bundle over $X$ with fibre the abelian Lie group $H^1_{Betti}(\sE_t, \C/\Z(2))$.  Using {\it motivic cohomology}, we will explicit a family of extensions such that the family of extension classes give a section $mot \in \Gamma(X, \sV\otimes \sO_X/\sV_\Z)$. We will show that 
\eq{}{\delta(mot) = eis. 
}
We will relate $mot$ to the Feynman amplitude viewed as a multivalued function on $X$, and then we will use $eis$ in order to calculate this function. 

$\sE/X$ will denote the family of elliptic curves defined by the equal mass sunset equation~\eqref{e:Asunsett}. The corresponding family of extensions as in the bottom line of~\eqref{40a} determine a family of elements 
\eq{}{mot_t \in \text{Ext}^1_{MHS}(\Q(0),H^1(\sE_t,\Q(2)) \cong H^1(\sE_t, \C/\Z(2)).
}
(Compare to~\eqref{34a}.) Write $mot \in \Gamma(X, (\sV\otimes \sO_X)/\sV_\Z)$ for the corresponding section of the bundle of generalized intermediate jacobians. 
\begin{lem}\label{lem6.4} Let $\partial(mot)\in H^1(X, \sV_\Q)$ be the boundary for the top sequence in~\eqref{38a} tensored with $\Q$. We have $\partial(mot) \in \bigoplus \Q(0) \subset H^1(X,\sV_\Q)\subset H^2_{Betti}(\sE, \Q(2))$.  
\end{lem}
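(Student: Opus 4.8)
The plan is to pin down the receptacle $\bigoplus\Q(0)$ — which is in fact \emph{all} of $H^1(X,\sV_\Q)$, so the inclusion becomes essentially formal — and then to refine the statement to say which $\Q(0)$'s occur, since that is what the later sections need. First I would identify the target. The inclusion $H^1(X,\sV_\Q)\subset H^2_{Betti}(\sE,\Q(2))$ is the Leray edge map for $f\colon\sE\to X$: since $X$ is an affine curve, $H^2(X,R^0f_*\Q)=H^3(X,R^0f_*\Q)=0$, so $H^1(X,\sV_\Q)=\mathrm{gr}^1_{\mathrm{Leray}}H^2(\sE,\Q(2))$ is a sub-mixed-Hodge-structure of $H^2(\sE,\Q(2))$. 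By section~\ref{sec:K3fam}, $\sE$ is the complement in a rational elliptic surface $\bar\sE$ of the four singular fibres $I_2,I_3,I_1,I_6$; since $\bar\sE$ is rational, $H^2(\bar\sE,\Q)=\Q(-1)^{10}$ and $H^1(\bar\sE,\Q)=0$, so $H^2(\sE,\Q)$, hence $H^1(X,\sV_\Q)$, is mixed Tate, with weight-$0$ graded piece spanned by the ``loop'' classes of the four singular fibres. A short count then finishes this part: the excision sequence for $j\colon X\hookrightarrow\P^1$ gives a surjection $H^1(X,\sV_\Q)\surj\bigoplus_{c}\Q(0)$, one copy per cusp $t=0,1,9,\infty$, while the Euler characteristic of $\sV_\Q$ on $X$ equals $-4$ and $H^0(X,\sV_\Q)=H^2(X,\sV_\Q)=0$, so $\dim_\Q H^1(X,\sV_\Q)=4$; hence $H^1(X,\sV_\Q)=\bigoplus_c\Q(0)$ (equivalently, $\Gamma_1(6)$ has no weight-$3$ cusp forms). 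With this, $\partial(mot)\in H^1(X,\sV_\Q)=\bigoplus\Q(0)$ is immediate from the definition of the connecting homomorphism of the top row of~\eqref{38a} tensored with $\Q$.

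For the refinement — that $\partial(mot)$ is a \emph{degree-zero} combination of torsion-point classes, which is what sections~\ref{sec:motive}--\ref{sec:eis} actually use — I would appeal to the construction of $mot$. By the splitting of~\eqref{36e} and the definition of $M$ in~\eqref{39e}, $mot$ is assembled entirely from the six points of~\eqref{e:6points}, which by Lemma~\ref{lem:torsion} are six-torsion on every fibre. The class $\partial(mot)$ is governed by the local monodromy of the family $\{mot_t\}$ around the four punctures of $X$: writing $\sE_t=\C/(\Z\tau+\Z)$ near a cusp $c$, the six points lie in $\tfrac16(\Z\tau+\Z)$, and the local monodromy (a translation $\tau\mapsto\tau+w_c$ by a positive integer $w_c$) translates the extension class $mot_t\in H^1(\sE_t,\C/\Z(2))$ by an element of $\tfrac16(\Z\tau+\Z)$, i.e. by a rational $0$-cycle of degree $0$ supported on those points. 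Hence the local invariant $N_c(mot)$, with $N_c=\log T_c$, is a rational multiple of the vanishing cycle at $c$, so $\partial(mot)$ lies in $\bigoplus_c\Q(0)$, and in its degree-zero part.

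The hard part will be this last step: making the degeneration of $mot$ at each cusp precise — checking that the admissible variation of mixed Hodge structure underlying $mot$ extends across the puncture, and computing its limiting mixed Hodge structure — in order to justify the stated value of $N_c(mot)$. This is the same subtlety that is responsible for the \emph{analytic} elliptic dilogarithm $E_\suns(q)$ of~\eqref{e:Hsunset} appearing in the amplitude, rather than the regulator class~\eqref{35}, which is expressed through the non-analytic Bloch-Wigner dilogarithm $D(z)$; the genuine work lies in carrying the difference between $mot$ and that regulator class through the limit.
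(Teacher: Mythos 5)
Your first argument takes a genuinely different route from the paper. You prove the containment by showing that $H^1(X,\sV_\Q)$ is in fact \emph{equal} to $\bigoplus_c\Q(0)$, and you are right that this is so: $\chi(X,\sV_\Q)=2\cdot\chi_{\rm top}(X)=-4$ with $H^0=H^2=0$ gives rank $4$, the residue map onto $\bigoplus_c(\sV_c)_{I_c}(-1)\cong\bigoplus_c\Q(0)$ is onto and has equal rank, and rationality of $\overline{\sE}$ forces the whole thing to be mixed Tate. By contrast, the paper never computes the ambient group. It constructs a specific Milnor symbol $S=\{-X/Z,-Y/Z\}\in H^2_M(\sE,\Q(2))$ on the \emph{total space} $\sE$ whose fibrewise regulator recovers $mot$, checks $S$ globalizes by a tame-symbol computation at the six marked points, and then uses the exact sequence~\eqref{45} describing Deligne cohomology to see that the composite $b\circ\mathrm{reg}$ of~\eqref{47} lands by construction inside $\Hom_{MHS}(\Q(0),H^2_{Betti}(\sE,\Q(2)))$, i.e.\ the Hodge classes $\bigoplus\Q(0)$. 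The two commutative diagrams~\eqref{48}--\eqref{49} identify $\partial(mot)$ with $b\circ\mathrm{reg}(S)$, so the claim follows with no numerology. What the paper's route buys is twofold. First, robustness: your dimension count hinges on $S_3(\Gamma_1(6))=0$, a coincidence of the level; for another modular level with nonzero weight-$3$ cusp forms your argument would not even give the containment, while the motivic argument would. Second, and more importantly, the class $S$ is not an auxiliary device — it is used immediately after the lemma, via its tame symbol at the four cusps, to pin down the Eisenstein section $\delta(mot)$. Your proof of the lemma would then have to be supplemented by exactly this construction anyway, so the paper's route is the more economical one globally even if it is the heavier one locally.

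The ``refinement'' paragraph should be cut or relocated: it proves nothing and is not what the lemma asserts. More to the point, the mechanism you gesture at (local monodromy of the fibrewise extension class translating by degree-zero torsion $0$-cycles, $N_c=\log T_c$) is not the one the paper uses. The paper reads off the residues of $\delta(mot)$ at the cusps from the \emph{tame symbol} of $S$ computed on each degenerate fibre (reducible or nodal cubic), an algebraic computation, rather than from an asymptotic analysis of the admissible variation $mot_t$; the computation you call ``the hard part'' is carried out in a few lines at the end of section~\ref{sec:motive} and is genuinely elementary once $S$ is available. This is precisely the content the lemma's motivic proof sets up and yours would not.
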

\begin{proof}  Consider a compactification 
\eq{39a}{\begin{CD} \sE @>>> \overline{\sE} \\
@VV f V @VV \overline f V \\
X @>>> \overline{X}
\end{CD}
}
with $\overline X$ a compact Riemann surface and fibres of $\overline f$ possibly degenerate. We assume $\overline \sE$ is a smooth variety, and we write $F=\coprod F_s = \overline\sE - \sE$. (The $F_s$ are possibly singular fibres.) Localization yields an exact sequence of Hodge structures
\eq{40b}{H^2(\overline \sE,\Q(2)) \to H^2(\sE,\Q(2)) \to H_1(F,\Q(0)) \xrightarrow{\mu} H^3(\overline \sE,\Q(2))
}
In cases of interest to us, the singular fibres $F_s$ will be nodal rational curves or a union of copies of $\P^1$ supporting a loop, so $H_1(F_s,\Q(0)) = \Q(0)$ as a Hodge structure. Since $\overline \sE$ is smooth and compact, $H^i(\overline\sE,\Q(2))$ is a pure Hodge structure of weight $i-4$. In particular, the map labeled $\mu$ in~\eqref{40b} is zero and we get an exact sequence of Hodge structures
\eq{41b}{0 \to \{\text{pure of weight -2}\} \to H^2(\sE,\Q(2)) \xrightarrow{r} \coprod_{\text{bad fibres}}\Q(0) \to 0.
}
The family of elliptic curves associated to the sunrise diagram with equal masses is {\it modular}, associated to the modular curve $X=X_1(6)$ (see section~\ref{sec:K3fam}). In the modular case, a remarkable thing happens~\cite{deligne}. The arrow $r$ in~\eqref{41b} is canonically split as a map of Hodge structures. This relates to the local system $\sV$ above because the Leray spectral sequence for $f$ yields
\eq{42}{H^1(X,\sV) = H^1(X, R^1f_*\Q_\sE(2)) \inj H^2(\sE,\Q(2)).
}
The splitting of $r$ in fact embeds $\bigoplus_{\text{bad fibres}}\Q(0) \subset H^1(X,\sV)$, and the assertion of the lemma is that $\partial(mot)$ lies in the sub Hodge structure  $\bigoplus_{\text{bad fibres}}\Q(0)$. The reason for this is that, rather than considering the section of a bundle of generalized intermediate jacobians given by the $mot_t$, one can work with the {\it motivic cohomology} of the total space $\sE$ of the family. 

Motivic cohomology $H^p_M(X,\Q(q))$ is defined using algebraic cycles on products of $X$ with affine spaces. It has functoriality properties (in the case of smooth varieties) which are analogous to Betti cohomology. For $E\subset \P^2$ an elliptic curve, we are interested in 
\eq{43}{H^2_M(E, \Q(2)) \to H^3_M(P,E;\Q(2)). 
}
Here $P$ is the blowup of $\P^2$ as in figure~\ref{fig:sunsetfig3}.  A sufficient condition for a finite formal sum $\sum_i (C_i, f_i)$ where $C_i \subset P$ is an irreducible curve and $f_i$ is a rational function on $C_i$ to represent an element in 
$H^3_M(P,E;\Q(2))$ is that firstly the sum $\sum_i (f_i)$ of the divisors of the $f_i$, viewed as $0$-cycles on $P$ should cancel, and secondly that $f_i|C_i\cap E$ should be identically $1$. As an example, take the $\mathfrak h_i$ to be the irreducible components of the hexagon $\mathfrak h\subset P$. Number the $\mathfrak h_i$ cyclically and take $f_i$ having a single zero and a single pole on $\mathfrak h_i$ such that the zero of $f_i$ coincides with the pole of $f_{i+1}$, the unique point $\mathfrak h_i\cap \mathfrak h_{i+1}$. Since $E\cap \mathfrak h_i = \{p_i\}$ is a single point which is not a zero or pole of the $f_i$, we can scale $f_i$ so $f_i(p_i)=1$. The resulting formal sum represents an element $T\in H^3_M(P,\sE;\Q(2))$. 

This element lifts back to an element $S\in H^2_M(\sE, \Q(2))$. Indeed, the Milnor symbol 
\eq{43a}{S:= \{-X/Z,-Y/Z\}
}
represents an element in the motivic cohomology of the function field of $\sE$. To check that it globalizes, we should check that there are no non-trivial tame symbols at the six points where the fibres $\sE_t$ meet the coordinate triangle. By symmetry it suffices to check the points with homogeneous coordinates $(0,-1,1)$ and $(0,0,1)$. Recalling the general formula
\eq{44}{\text{tame}_x\{f,g\} = (-1)^{ord_x(f)ord_x(g)}(f^{ord_x(g)}/g^{ord_x(f)})(x)\in k(x)^\times
}
it is straightforward to check that the tame symbols are both $\pm
1$. Since we are working with coefficients in $\Q$ the presence of
torsion need not concern us, and we conclude that $S$ globalizes. The
assertion that $S\mapsto T$ in~\eqref{43} amounts to the assertion
that $S$ viewed as a Milnor symbol now on $P$ has $T$ as tame
symbol. For more on this sort of calculation, one can see~\cite{doran}.
 
We will also need {\it Deligne cohomology} $H^p_\sD(X,\Q(q))$ (see the articles of Schneider, Esnault-Viehweg, and Jannsen in~\cite{rss}, as well as~\cite{soule}). It sits in an exact sequence 
\ml{45}{0 \to \text{Ext}^1_{MHS}(\Q(0), H^{p-1}_{Betti}(X,\Q(q))) \to H^p_\sD(X,\Q(q)) \xrightarrow{b} \\
\text{Hom}_{MHS}(\Q(0),H^{p}_{Betti}(X,\Q(q))) \to 0.
}
(The referee points out that this is true only for $X$ smooth, projective. The analogous result for more general $X$ involves the absolute Hodge cohomology whose definition involves both the weight and Hodge filtration. One has quite generally
\eq{}{Ext^1_{MHS}(\Q(0), H^{p-1}(X, \Q(q))) \cong \frac{W_{2q}H^{p-1}(X,\C)}{F^qW_{2q}H^{p-1}(X,\C)+W_{2q}H^{p-1}(X,\Q)}.
}
In our case, $p=q=2$ so $W_{2q}H^{p-1}=H^{p-1}$ and there is no
distinction.)

Motivic and Deligne cohomologies are related by the regulator map 
\eq{46}{reg: H^p_M(X,\Q(q)) \to H^p_\sD(X, \Q(q)).
}
The composition
\ml{47}{b\circ reg: H^p_M(X,\Q(q)) \to \text{Hom}_{MHS}(\Q(0),H^{p}_{Betti}(X,\Q(q))) \\
\subset H^{p}_{Betti}(X,\Q(q))
}
is the Betti realization. Note that the image lands in a sub-Hodge structure of the form $\oplus \Q(0)$. 

To finish the proof of the lemma, we consider two diagrams. \begin{small}
\eq{48}{\begin{CD} H^2_M(\sE, \Z(2)) @>>> H^2_M(\sE_t, \Z(2)) \\
@VV reg V @VV reg V \\
H^2_\sD(\sE,\Z(2)) @>>> H^2_\sD(\sE_t, \Z(2)) \\
@VV\text{localize}V @AA \cong A \\
 \Gamma(X, (\sV\otimes \sO_X)/\sV_\Z) @>\text{rest. to fibre}>> H^1_{Betti}(\sE_t, \C/\Z(2)) \\
 @. @VV \cong V \\
 @. \text{Ext}^1_{MHS}(\Z(0), H^1_{Betti}(\sE_t, \Z(2))). 
\end{CD}
}
\end{small}
The arrow that requires some comment here is labeled ``localize''. A Deligne cohomology class on $\sE$ yields by restriction Deligne cohomology classes on the fibres $\sE_t$ which are just elements in $H^1_{Betti}(\sE_t,\C/\Z(2))$. As $t$ varies, however, these classes are not locally constant. They do not glue to sections of $\sV_\C/\sV_\Z$ but rather to $(\sV\otimes \sO_X)/\sV_\Z$ as indicated. 

The second relevant commutative diagram is
\eq{49}{\begin{CD}H^2_M(\sE, \Q(2)) @> b >> H^2_{Betti}(\sE, \Q(2)) \\
@VVV @AA\text{inject} A \\
\Gamma(X, (\sV\otimes \sO_X)/\sV_\Z) @> \partial >> H^1(X, \sV_\Q) 
\end{CD}
}
Assembling these two diagrams and using that the image of $b$  in~\eqref{49} lies in the subspace of Hodge classes $\bigoplus \Q(0) \subset H^2_{Betti}(\sE, \Q(2))$, the lemma is proven. 
\end{proof}

The fact that $mot \in \Gamma(X, (\sV\otimes \sO_X)/\sV_\Z)$ comes from motivic cohomology enables us to control how this section degenerates at the cusps. This is a slightly technical point and we do not give full detail. The argument is essentially an amalgam of~\cite{rss}, p. 47 in the expos\'e of Esnault-Viehweg, where the regulator of symbols like $S$~\eqref{43a} in $H^1_{Betti}(\sE_t,\C/\Z(2))$ is calculated, and the classical computation of the Gau\ss-Manin connection~\cite{katz}. Let $X \inj \overline X$ be the compactification, and let $c\in \overline X-X$ be a cusp. Let $D^* \subset X$ be a punctured analytic disk around $c$. Consider a symbol $\{g,h\}$ on the family $\sE_{D^*}$. Using~\cite{rss}, we can identify the target of the regulator map for the family with the hypercohomology group $\H^1(\sE_{D^*}, \sO^\times_{\sE_{D^*}} \xrightarrow{d\log} \Omega^1_{\sE_{D^*}/D^*})$. For our purposes it will suffice to calculate the regulator on the open set $\sE'_{D^*}= \sE_{D^*}-\text{div}(g)-\text{div}(h)$.

Let $\{U_j\}$ be an open analytic covering of $\sE'_{D^*}$ such that $\log h$ admits a single-valued branch $\log_j h$ on $U_j$. Let $m_{jk} = (\log_k h - \log_j h)/2\pi i$ on $U_{jk}=U_j\cap U_k$. The regulator applied to $\{g,h\}$ is represented by the Cech cocycle 
\eq{50}{(g^{m_{jk}}, \frac{1}{2\pi i}\log_j h\cdot dg/g) \in C^1(\{U_j\},\sO^\times)\oplus C^0(\{U_j\}, \Omega^1_{\sE'_{D^*}/D^*})
}
We want to compute the image of this class under the map $\delta$~\eqref{38a}. Consider the diagram of complexes with exact columns \minCDarrowwidth.1cm
\eq{51}{\begin{CD}@. 0 @. 0 \\
@. @VVV @VVV \\@. \sO_{\sE'_{D^*}}\otimes \Omega^1_{D^*} @>d >> \Omega^1_{\sE'_{D^*}/D^*}\otimes \Omega^1_{D^*}  \\ 
@. @VVV @VV \cong V \\
\sO_{\sE'_{D^*}}^\times @> d\log >> \Omega^1_{\sE'_{D^*}} @> d >> \Omega^2_{\sE'_{D^*}} \\
@| @VVV @VVV  \\
\sO_{\sE'_{D^*}}^\times @> d\log >> \Omega^1_{\sE'_{D^*}/D^*} @> >> 0 \\
@. @VVV \\
@.  0
\end{CD}
}
The cocycle~\eqref{50} represents a one-cohomology class in the bottom complex. The vertical coboundary yields a two-cohomology class in the top complex, which is just the relative de Rham complex shifted and tensored with $\Omega^1_{D^*}$. Note the relative de Rham complex is exactly the bundle $\sV\otimes \sO_X$ restricted to $D^*\subset X$. In fact, this vertical coboundary is exactly the map $\delta$ from~\eqref{38a}. To calculate, we view the $0$-cochain $\frac{1}{2\pi i}\log_j h\cdot dg/g$ as a cochain in the absolute one-forms $\Omega^1_{\sE'_{D^*}}$ and apply $d$
\eq{52}{d(\frac{1}{2\pi i}\log_j hdg/g) = \frac{1}{2\pi i}dh/h\wedge dg/g \in \Omega^2_{\sE'_{D^*}} \cong \Omega^1_{\sE'_{D^*}/D^*}\otimes \Omega^1_{D^*}. 
}
Notice that a log form $dh/h\wedge dg/g$ cannot have more than a first order pole along the components of the fibre over the cusp $c$. Furthermore, it is clear from~\eqref{52} that 
\eq{}{\delta(mot) \in \Gamma(\overline X, \Omega^1_{\overline\sE/\overline X}(\log \text{cusps})\otimes \Omega^1_{\overline X}).
}
We can now rewrite the diagram from~\eqref{38a} \minCDarrowwidth.1cm
\eq{52a}{\begin{CD}\Gamma(X, (\sV\otimes \sO_X)/\sV_\Z) @> \partial >> H^1(X, \sV_\Z) \\
@VV \delta V @VVV \\
\Gamma(X, \sV\otimes \Omega^1_X) @>>> H^1(X, \sV_\C) \\
@AA i A \\
\Gamma(X, \Omega^1_{\sE/X}\otimes \Omega^1_X) 
\end{CD}
}
We conclude from~\eqref{51} that $\delta(mot)$ lies in the image of $i$ in~\eqref{52a} and further that it has at worst logarithmic poles at the cusp. It is known that the space of such sections is spanned by the Eisenstein series $eis_\psi$ discussed in section~\ref{39b}. 

One other important piece of information we have by virtue of the motivic interpretation of the amplitude concerns the behavior of the Eisenstein section $\delta(mot)$ at the cusps. This is determined by the behavior of the Milnor symbol $S$~\eqref{43a} under the tame symbol mapping
\eq{}{tame: H^2_M(\sE, \Q(2)) \to \coprod_{c\in \overline X-X} H^M_1(\sE_c,\Q(0))
}
The elliptic curve $\cE_t$ in~\eqref{e:Asunset} has four cusps at
$t=0,1,9, \infty$, discussed in section~\ref{sec:K3fam}. It is elementary to check that the tame symbol
$tame(S)$ is trivial for $t=0,1,9$. Indeed, at $t=0$ the
curve~\eqref{e:Asunsett} becomes reducible with components
$1+X/Z+Y/Z=0$ and $X/Z+Y/Z+XY/Z^2=0$. Since the entries $X/Z, Y/Z$ of
the symbol $S$ are not identically $0, \infty$ on either of these
components, the tame symbol vanishes. Similarly, at $t=1$ the curve
factors as $(X+Z)(Y+Z)(X+Y)$ so again there is no contribution. At
$t=9$ the curve has a unique singular point at $X=Z, Y=Z$ and there is
no contribution. Finally at $t=+\infty$ the fibre is $XYZ=0$; one has
$H_1^M(\{XYZ=0\},\Q(0))=\Q$ and $tame_\infty(S)$ is a generator.  

\section{The motive for the sunset graph: Eisenstein Series}\label{sec:eis}

We now assume $X$ is a modular curve, i.e. $X\cong \H/\Gamma$ where $\H=\{\tau\in \C\ |\ \Imm(\tau)>0\}$ and $\Gamma\subset SL_2(\Z)$ is a congruence subgroup. Let $\tilde f: \widetilde\sE \to \H$ be the pullback of the family of elliptic curves. We have
\ga{38c}{\Omega^1_{\widetilde\sE/\H}=\sO_\H dz;\quad \begin{pmatrix}a & b \\ c & d\end{pmatrix}dz = \frac{dz}{c\tau+d} \\
\Omega^1_\H = \sO_\H d\tau;\quad \begin{pmatrix}a & b \\ c & d\end{pmatrix}d\tau = \frac{d\tau}{(c\tau+d)^2}.
}
Let $\psi: \Z^2 \to \C$ be a map such that
$\psi(\gamma(a,b))=\psi(a,b)$ for any $a, b\in \Z$ and any $\gamma\in
\Gamma$. Define 
\eq{39b}{eis_\psi := \sum_{\substack{(a,b)\in \Z^2\\(a,b)\neq (0,0)}}\frac{\psi(a,b)d\tau dz}{(a\tau+b)^3}.
}
It is straightforward to check that $eis_\psi$ is invariant under $\Gamma$ and descends to a section of $\Omega^1_{\sE/X}\otimes \Omega^1_X \subset \sV\otimes \Omega^1_X$ over $X=\H/\Gamma$. 

This is the classical sheaf-theoretic interpretation of Eisenstein
series for $SL_2(\Z)$~\cite{koblitz}. The next step is to determine
$\psi$ which we do by studying the constant terms of the
$q$-expansions at the cusps. We have seen in section~\ref{sec:K3fam} that for  the one-mass sunset
graph we have the $X_1(6)$ modular curve, and that the four cusps
$t=0,1,9,+\infty$ are mapped by eq.~\eqref{e:thaupt9} to the four cusps for the action of
$\Gamma_1(6)$ on $\P^1(\Q)$ represented by $\tau= 1/3,
1/2,+i\infty,0$ respectively . We know the constant term of the
$q$-expansion should vanish for $t=0,1,9$ i.e. for $\tau = +i\infty,
1/3, 1/2$. Indeed, $q=\exp(2\pi i\tau)$ is the parameter at the cusp
at $+i\infty$, so $d\tau = \frac{1}{2\pi i}dq/q$ has a pole at the
cusp. Thus, the constant term of the $q$-expansion coincides up to a
factor of $2\pi i$ with the residue at the cusp.  
\begin{lem}\label{lem4.4} For $u, v\equiv 0,1, 2, 3, 4, 5 \mod 6$, define
\eq{40c}{
eis^{u,v}(\tau) = \sum_{(a,b)\equiv (u,v)\!\!\!\! \mod 6} \frac{1}{(a\tau + b)^3}
}
We have the following assertions about constant terms of $q$-expansions. \newline\noindent
(i) The constant term of the $q$-expansion at $\tau = \infty$ vanishes unless $u\equiv 0 \mod 6$. \newline\noindent
(ii) The constant term of the $q$-expansion at $\tau = 1/2$ vanishes unless $u+2v\equiv 0 \mod 6$. \newline\noindent
(iii) The constant term of the $q$-expansion at $\tau = 1/3$ vanishes unless $u+3v\equiv 0 \mod 6$. \newline\noindent
(iv) The constant term of the $q$-expansion at $\tau =0$ vanishes unless $v\equiv 0 \mod 6$. 
\end{lem}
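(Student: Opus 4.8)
\emph{Proof proposal.} The plan is to reduce all four cases to a single computation of the constant term at $\tau=i\infty$, using the weight-$3$ transformation behaviour of the raw sums $eis^{u,v}$ under $SL_2(\Z)$.

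First I would record the convergence fact that legitimises all rearrangements below: the series $eis^{u,v}(\tau)=\sum_{(a,b)\equiv(u,v)}(a\tau+b)^{-3}$ (with $(a,b)=(0,0)$ omitted in the case $u\equiv v\equiv 0\bmod 6$) converges absolutely and locally uniformly on $\H$, since the exponent $3$ exceeds $2$. Next, for $\gamma=\begin{pmatrix}a&b\\c&d\end{pmatrix}\in SL_2(\Z)$, the identity $a'\gamma\tau+b'=\bigl((a'a+b'c)\tau+(a'b+b'd)\bigr)/(c\tau+d)$, together with the fact that $(a',b')\mapsto(a',b')\gamma$ is a bijection of $\Z^2$ carrying the sublattice $6\Z^2$ into itself, yields the transformation law
\[
(c\tau+d)^{-3}\,eis^{u,v}(\gamma\tau)=eis^{(u,v)\gamma}(\tau),\qquad (u,v)\gamma:=(ua+vc,\ ub+vd)\bmod 6 .
\]
Consequently, if a cusp $\mathfrak s$ is written $\mathfrak s=\gamma_{\mathfrak s}(i\infty)$ with $\gamma_{\mathfrak s}\in SL_2(\Z)$, the constant term of the $q$-expansion of $eis^{u,v}$ at $\mathfrak s$ equals the constant term at $i\infty$ of $eis^{(u,v)\gamma_{\mathfrak s}}$. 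Replacing $\gamma_{\mathfrak s}$ by a right factor $\pm\begin{pmatrix}1&n\\0&1\end{pmatrix}$ changes the first coordinate of $(u,v)\gamma_{\mathfrak s}$ only by a sign, so the vanishing statements we are about to prove are independent of the chosen representative.

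It then remains to compute the constant term at $i\infty$ of a general $eis^{w,w'}$. Splitting the sum according to whether $a=0$: the $a=0$ part equals the constant $\sum_{0\neq b\equiv w'\bmod 6}b^{-3}$ when $w\equiv 0\bmod 6$ and is absent otherwise, while for each fixed $a\neq 0$ the inner sum $\sum_{b\equiv w'\bmod 6}(a\tau+b)^{-3}$ is evaluated by the Lipschitz summation formula $\sum_{n\in\Z}(z+n)^{-3}=-\tfrac{(2\pi i)^3}{2}\sum_{d\ge 1}d^2e^{2\pi i dz}$ (applied with $z=(a\tau+w')/6$ when $a>0$, and to $-(a\tau+b)=(-a)\tau+(-b)$ when $a<0$), hence is $O\!\left(e^{-c|a|\,\Imm(\tau)}\right)$ for some $c>0$; summing over $a\neq 0$, this part tends to $0$ as $\Imm(\tau)\to\infty$. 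Therefore the constant term of $eis^{w,w'}$ at $i\infty$ is $\sum_{0\neq b\equiv w'\bmod 6}b^{-3}$ if $w\equiv 0\bmod 6$, and $0$ otherwise; in particular it vanishes whenever $w\not\equiv 0\bmod 6$.

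Finally I would substitute explicit cusp representatives and read off the first coordinate of $(u,v)\gamma_{\mathfrak s}$: for $\mathfrak s=i\infty$, $\gamma_{\mathfrak s}=I$ and the first coordinate is $u$ (case (i)); for $\mathfrak s=0$, $\gamma_{\mathfrak s}=\begin{pmatrix}0&-1\\1&0\end{pmatrix}$ gives $(v,-u)$, first coordinate $v$ (case (iv)); for $\mathfrak s=1/2$, $\gamma_{\mathfrak s}=\begin{pmatrix}1&0\\2&1\end{pmatrix}$ gives $(u+2v,v)$, first coordinate $u+2v$ (case (ii)); for $\mathfrak s=1/3$, $\gamma_{\mathfrak s}=\begin{pmatrix}1&0\\3&1\end{pmatrix}$ gives $(u+3v,v)$, first coordinate $u+3v$ (case (iii)). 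By the previous paragraph the constant term vanishes unless the relevant first coordinate is $\equiv 0\bmod 6$, which is exactly the four assertions. I expect no genuine analytic obstacle, since $k=3$ lies safely in the range of absolute convergence; the only real care needed is bookkeeping, namely getting the row-vector action $(u,v)\mapsto(u,v)\gamma$ (rather than its transpose) right and confirming that the vanishing conclusion does not depend on the chosen cusp representative, both of which I have isolated above.
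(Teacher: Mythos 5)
Your proposal is correct and uses essentially the same idea as the paper: move each cusp to $i\infty$ by an $SL_2(\Z)$ substitution and relabel indices to see which terms can contribute a constant term. The paper carries out the substitution directly (e.g.\ $\tau=(-\tau'+1)/(-2\tau'+1)$ for the cusp $1/2$) while you package the same computation as a clean transformation law $(c\tau+d)^{-3}eis^{u,v}(\gamma\tau)=eis^{(u,v)\gamma}(\tau)$, plus a Lipschitz-formula estimate and a check that the conclusion is independent of the chosen cusp representative; these are useful clarifications but not a different method.
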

\begin{proof}The assertion for $\tau \to \infty$ is straightforward because the series~\eqref{40c} converges uniformly as $\tau \to +i\infty$ so we can take the limit term by term. The only terms which survive are those with $a=0$. For the other assertions, we transform the cusp in question to $+i\infty$. For example, taking $\tau' = (\tau-1)/(2\tau-1)$ transforms $\tau=1/2$ to $\tau'=+i\infty$. Substituting the inverse $\tau = (-\tau'+1)/(-2\tau'+1)$ yields
\ml{}{eis^{u,v}(\tau) = \sum_{(a,b)\equiv (u,v)\!\!\!\! \mod 6} \frac{1}{(a(-\tau'+1)/(-2\tau'+1) + b)^3} = \\
(-2\tau'+1)^3 \sum_{(a,b)\equiv (u,v)\!\!\!\! \mod 6} \frac{1}{((-a-2b)\tau'+(a+b))^3}.
}
This vanishes at $\tau' \to +i\infty$ unless $u+2v\equiv 0 \mod 6$. The arguments for (iii) and (iv) are similar. 
\end{proof}

We consider a series $\sum \psi(a,b)(a\tau+b)^{-3}$. We assume $\psi(a,b)$ only depends on $a, b\mod 6$. Define $\psi^{u,v}(a,b)$ to be $1$ if $a\equiv u,\ b\equiv v\mod 6$ and zero otherwise. Write
\eq{}{\psi = \sum_{u,v \mod 6} c(u,v)\psi^{u,v}.
}
The condition that $eis_\psi(\tau)$ should be invariant under $\Gamma_1(6)$ translates into the requirement $\psi(a,b) = \psi((a,b)\begin{pmatrix}\alpha & \beta \\ \gamma & \delta\end{pmatrix})$ for $\begin{pmatrix}\alpha & \beta \\ \gamma & \delta\end{pmatrix}\in \Gamma_1(6)$. Since 
\eq{}{(a,b)\begin{pmatrix}\alpha & \beta \\ \gamma & \delta\end{pmatrix} \equiv (a,b+\beta a) \mod 6
}
it follows that 
\eq{41c}{c(u,v) = c(u,v+\beta u);\ \ \forall \beta \in \Z.
}

Writing $(a,b) = 6^n(a',b')$ where $a', b'$ are not both divisible by
$6$, we can write $eis^{0,0}(\tau)$ as a sum with constant
coefficients of $eis^{u,v}$ with $(u,v)\neq (0,0)$. Hence we can take
$c(0,0)=0$. We can further simplify by dropping $\psi^{u,v}$ if $u,v$
have a common factor $2$ or $3$. Further we can assume $c(u,v) =
-c(-u,-v)$. We see now from (ii) in lemma~\ref{lem4.4} that terms
contributing to the constant term at the cusp $1/2$ are $(u,v) =
(4,1), (2,5)$. Since $eis^{4,1}=-eis^{2,5}$ and $c(4,1)=-c(2,5)$ there
is only one contribution which is a non-zero multiple of $c(4,1)$. It
follows that $c(4,1)=c(2,5)=0$. Similarly the terms which contribute
at $\tau = 1/3$ are $(3,1), (3,5)$. Again these are  negatives and the
same argument implies $c(3,1)=c(3,5)=0$.  

We have now the following information about the $c(i,j)$:
\ga{}{c(1,0)=c(1,1)=c(1,2)=c(1,3)=c(1,4)=c(1,5)\quad (\text{by }\eqref{41c}) \notag \\
c(2,0)=c(2,2)=c(2,4)=0\quad  (\text{by divisibility}) \notag \\
c(2,1)=c(2,3)=c(2,5)=0 \quad (\text{by~\eqref{41c} and the above}) \notag \\
c(3,1) = c(3,4) = 0 \quad (\text{by~\eqref{41c} and the above})  \\
c(3,2) = c(3,5) = 0 \quad (\text{by~\eqref{41c} and the above}) \notag \\
c(3,0)=c(3,3)=0 \quad  (\text{by divisibility}) \notag \\
c(4,x)=-c(2,-x)=0 \notag \\
c(5,x) = -c(1,0). \notag
}
For the character $\psi(n)$ defined in eq.~\eqref{e:psidef} we have now proven 
\begin{thm} \label{thm7.11}
Define 
\eq{}{\psi(a,b) = \psi(a) :={ (-1)^{a-1}\over \sqrt3}\, \left(\sin({\pi a\over3})+\sin({2\pi a\over3})\right).
} 
Then for some $c\neq 0$ we have 
\eq{}{\delta(mot) = c\cdot eis_\psi \in \Gamma(X, \sV\otimes\Omega^1_X). \ \ \ (\text{cf.~\eqref{38a}})
}
\end{thm}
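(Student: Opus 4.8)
The preceding discussion already does most of the work; the plan is to assemble it. Three facts are in place. First, from the Cech computation around diagram~\eqref{52a} we know that $\delta(mot)$ is a holomorphic section of $\Omega^1_{\sE/X}\otimes\Omega^1_X\subset\sV\otimes\Omega^1_X$ with at worst logarithmic poles at the cusps, hence equals $eis_\chi$ for some $\Gamma_1(6)$-invariant weight function $\chi$ on $\Z^2$ depending only on $a,b\bmod 6$. Second, $\Gamma_1(6)$-invariance forces $\chi$ to satisfy the periodicity~\eqref{41c}. Third, the motivic origin of $mot$ pins down the constant terms of the $q$-expansion of $eis_\chi$ at the four cusps in terms of the tame symbol of the Milnor symbol $S=\{-X/Z,-Y/Z\}$ of~\eqref{43a}: up to a factor of $2\pi i$ these constant terms are the residues at the cusps, they vanish at $t=0,1,9$ (i.e.\ at $\tau=1/3,1/2,+i\infty$), and the one at $t=\infty$ (i.e.\ $\tau=0$) is a nonzero multiple of the generator $tame_\infty(S)$ of $H^M_1(\{XYZ=0\},\Q(0))\cong\Q$.

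First I would write $\chi=\sum_{u,v\bmod 6}c(u,v)\,\psi^{u,v}$ and carry out the normalizations from the paragraph preceding the theorem: one may take $c(0,0)=0$ by the rescaling $(a,b)=6^n(a',b')$, drop $c(u,v)$ whenever $u,v$ have a common factor $2$ or $3$, and impose $c(u,v)=-c(-u,-v)$ (matching $eis^{-u,-v}=-eis^{u,v}$). Then I would feed the three cusp-vanishing conditions into Lemma~\ref{lem4.4}: applied at $\tau=1/2$ (the cusp $t=1$), part~(ii) leaves only $(4,1)$ and $(2,5)\equiv-(4,1)\bmod 6$ and, since their Eisenstein series are negatives, forces $c(4,1)=c(2,5)=0$; applied at $\tau=1/3$ (the cusp $t=0$), part~(iii) leaves only $(3,1),(3,5)$ and forces $c(3,1)=c(3,5)=0$; applied at $\tau=+i\infty$ (the cusp $t=9$), part~(i) leaves only $(0,1),(0,5)$ and forces $c(0,1)=c(0,5)=0$. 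Using~\eqref{41c} once more, all $c(1,v)$ are equal and $c(5,v)=-c(1,0)$, so $\chi=c(1,0)\cdot\psi$ with $\psi(a,b)=\psi(a)$ the character of~\eqref{e:psidef} (one checks directly that $\psi$ is supported on $a\equiv\pm1\bmod 6$ with values $\pm1$, consistent with the reductions). Hence $\delta(mot)=c\cdot eis_\psi$ for a scalar $c$.

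It remains to see $c\neq0$, and here the last piece of cusp data enters: since $tame_\infty(S)$ generates $H^M_1(\{XYZ=0\},\Q(0))$, the section $\delta(mot)$ has a nonzero residue at the cusp $\tau=0$, whereas $c=0$ would make $\delta(mot)$ vanish identically; alternatively, Lemma~\ref{lem4.4}(iv) together with $c(1,0)\neq0$ shows $eis_\psi$ itself has a nonzero constant term at $\tau=0$, so the proportionality constant cannot be zero. The step I expect to be the genuine obstacle is the dictionary ``tame symbol of $S$ at a cusp $\leftrightarrow$ residue of $\delta(mot)$ there'': this is the amalgam of the Esnault--Viehweg regulator formula and the Gauss--Manin connection computation alluded to before diagram~\eqref{52a}, and making it precise at each of the four cusps means unwinding the Cech representative~\eqref{50} and the relative coboundary~\eqref{52} in a punctured disk around the corresponding point of $\overline X$. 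Everything downstream of that dictionary is the finite bookkeeping with Lemma~\ref{lem4.4} and~\eqref{41c} summarized above.
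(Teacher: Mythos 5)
Your proof follows essentially the same route as the paper's: $\delta(mot)$ lies in the space of Eisenstein sections of $\Omega^1_{\sE/X}\otimes\Omega^1_X$ with at worst logarithmic poles at the cusps, and the vanishing of the constant terms at three of the four cusps (forced by triviality of the tame symbol of $S$ there), combined with $\Gamma_1(6)$-invariance \eqref{41c} and the reductions fed through Lemma~\ref{lem4.4}, singles out $\psi$ up to scale, with $c\neq 0$ coming from the nontrivial tame symbol at $t=\infty$. One small bonus of your write-up is that you explicitly invoke the vanishing at $\tau=+i\infty$ via Lemma~\ref{lem4.4}(i) to kill $c(0,1)=c(0,5)$; the paper's displayed list of constraints on the $c(i,j)$ only records $u=1,\dots,5$ and leaves the $u\equiv 0$ case implicit, even though that constraint is needed for the claimed uniqueness.
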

\begin{proof}The list of values on the right hand side for
  $a=0,1,2,3,4,5$ is $0, 1,0,0,0,-1$. This list is uniquely determined
  up to scale by the above conditions. 
\end{proof}

Our objective now is to sharpen theorem~\ref{thm7.11} by computing the
constant $c$. The residue of the symbol $S$~\eqref{43a} at $t=+\infty$
is $1$, but we should multiply by $1/6$ because the modular curve $X_1(6)$ is ramified of order $6$ under the cusp $\tau=0$. The constant term in the
$q$-expansion of the Eisenstein series is computed in
lemma~\ref{lem7.14} below to be $-{\pi^3\over 9\sqrt{3}}$.
 Combining these values, we get 
\eq{}{\text{Residue}(S) = \frac16 = \text{Residue}(\delta(mot)) =-c\cdot \frac{-\pi^3}{9\sqrt{3}}\,,
}
therefore
\begin{equation}
  \label{e:cvalue}
  c= {-6\sqrt3\over \pi(2\pi i)^2}\,.
\end{equation}

With the following lemma we evaluate at the cusp $\tau=0$,  the constant term of the
Eisenstein series $\sum_{(a,b)\in\mathbb Z^2\atop (a,b)\neq(0,0)} \psi(a,b)/(a\tau+b)^3$ with $\psi$ the character
in~\eqref{e:psidef} 

\begin{lem}\label{lem7.14}Let $\psi(a,b)=\psi(a)$ in
  eq.~\eqref{e:psidef}  as in the theorem~\ref{thm7.11}. Then the value of the constant term of the Eisenstein series 
  \eq{}{\sum_{(a,b)\in\mathbb Z^2\atop (a,b)\neq(0,0)} {\psi(a,b)\over(a\tau+b)^3}
  } 
  at the cusp $\tau = 0$ is  $-\pi^3/(9\sqrt{3})$.
\end{lem}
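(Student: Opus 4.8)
The plan is to transport the cusp $\tau=0$ to $\tau=+i\infty$, read off the constant term as the contribution of the ``$a=0$'' terms of the transformed series, and then evaluate the resulting Dirichlet $L$-value in closed form.

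Set $E(\tau):=\sum_{(a,b)\neq(0,0)}\psi(a)\,(a\tau+b)^{-3}$, regarded as a weight-$3$ Eisenstein series (equivalently, as the coefficient of $d\tau\,dz$ in $eis_\psi$). The matrix $\gamma=\begin{pmatrix}0&-1\\1&0\end{pmatrix}\in SL_2(\Z)$ sends $+i\infty$ to $0$, so the $q$-expansion at the cusp $\tau=0$ is that of $\tau^{-3}E(-1/\tau)$. Substituting and reindexing by $(a',b')=(b,-a)$ gives
\[
\tau^{-3}E(-1/\tau)=\sum_{(a,b)\neq(0,0)}\frac{\psi(a)}{(b\tau-a)^{3}}=\sum_{(a',b')\neq(0,0)}\frac{\psi(-b')}{(a'\tau+b')^{3}}\,.
\]
By the Lipschitz summation formula, for each fixed $a'\neq0$ the inner sum over $b'$ is exponentially small as $\Imm\tau\to+\infty$; hence the constant term is exactly the $a'=0$ part, $\sum_{b'\neq0}\psi(-b')/b'^{3}$. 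As $-1\equiv5\pmod6$ and $\psi(5)=-1$, the character $\psi$ is odd, so $\psi(-b')=-\psi(b')$ and the constant term equals $-2\sum_{n\geq1}\psi(n)/n^{3}=-2L(3,\psi)$.

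It then remains to evaluate $L(3,\psi)$. I would use the elementary identity $\psi(n)=\tfrac{1}{\sqrt3}\bigl(\sin(n\pi/3)+\sin(2n\pi/3)\bigr)$ --- the factor $(-1)^{n-1}$ in~\eqref{e:psidef} merely interchanges the two sine terms and so may be dropped --- which reduces the problem to the Fourier series $G_{3}(\theta):=\sum_{n\geq1}\sin(n\theta)/n^{3}$. Integrating the classical identity $\sum_{n\geq1}\cos(n\theta)/n^{2}=\pi^{2}/6-\pi\theta/2+\theta^{2}/4$, valid for $0\leq\theta\leq2\pi$, gives $G_{3}(\theta)=\pi^{2}\theta/6-\pi\theta^{2}/4+\theta^{3}/12$ on the same interval, hence $G_{3}(\pi/3)=5\pi^{3}/162$ and $G_{3}(2\pi/3)=2\pi^{3}/81$. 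Thus $L(3,\psi)=\tfrac{1}{\sqrt3}\bigl(G_{3}(\pi/3)+G_{3}(2\pi/3)\bigr)=\pi^{3}/(18\sqrt3)$, and the constant term is $-2L(3,\psi)=-\pi^{3}/(9\sqrt3)$, as claimed. As a cross-check, $\psi$ agrees with the nontrivial Dirichlet character $\chi$ modulo $3$ on odd integers and vanishes on even integers, so $L(3,\psi)=(1+2^{-3})L(3,\chi)=\tfrac{9}{8}L(3,\chi)$ with $L(3,\chi)=4\pi^{3}/(81\sqrt3)$.

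I do not expect a genuine obstacle here. The only delicate points are the bookkeeping in the transformation $\tau\mapsto-1/\tau$ together with the weight-$3$ automorphy factor --- in particular the overall sign, which is forced by the oddness of $\psi$ --- and recalling the correct closed form for the cubic Clausen-type series $G_{3}(\theta)$.
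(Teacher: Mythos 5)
Your proof is correct, and the first half---pulling back the cusp $\tau=0$ to $\tau=+i\infty$ via $\tau\mapsto-1/\tau$, reindexing, and identifying the constant term with the $a'=0$ contribution $-2L(3,\psi)$---is essentially the same as the paper's (the paper's computation~\eqref{42c} is exactly your $\sum_{b'\neq0}\psi(-b')/b'^3=-2\sum_{a\geq 1}\psi(a)/a^3$). Where you diverge is in evaluating the $L$-value: the paper writes $L(3,\psi)$ as $\frac{1}{6^3}\bigl(\zeta(3,\tfrac16)-\zeta(3,\tfrac56)\bigr)$, expands the Hurwitz zeta functions in powers of the shift, and resums the resulting series of Riemann zeta values against a closed form involving $\cos(\pi x)/\sin^3(\pi x)$; you instead use the unsigned form $\psi(n)=\frac{1}{\sqrt3}(\sin(n\pi/3)+\sin(2n\pi/3))$ (your observation that $(-1)^{n-1}$ merely swaps the two sine terms is correct) and the classical cubic Bernoulli-polynomial closed form $\sum_{n\geq1}\sin(n\theta)/n^3=\frac{\pi^2\theta}{6}-\frac{\pi\theta^2}{4}+\frac{\theta^3}{12}$ on $[0,2\pi]$. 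Your route is more elementary and avoids the Hurwitz zeta bookkeeping; the paper's route is closer in spirit to a direct special-value computation and generalizes more transparently to other residue classes. Both give $L(3,\psi)=\pi^3/(18\sqrt3)$ and hence $C=-\pi^3/(9\sqrt3)$.
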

\begin{proof}Let $\tau' = -1/\tau$. We need to compute
\ml{42c}{C=\lim_{\tau'\to +i\infty}\Big(\sum_{a\equiv 1\!\!\!\!\!\mod 6}\frac{1}{(-a+b\tau')^3}- \sum_{a\equiv 5\!\!\!\!\!\mod 6}\frac{1}{(-a+b\tau')^3}\Big) = \\
-2\Big(\sum_{\substack{a\ge 1\\ a\equiv 1\!\!\!\!\!\mod 6}}a^{-3}- \sum_{\substack{a\ge 1\\ a\equiv 5\!\!\!\!\!\mod 6}}a^{-3}\Big).
}
One can rewrite this constant term in terms of Hurwitz zeta functions
$\zeta(s,a):=\sum_{n\geq0} (n+a)^{-s}$  as 
\begin{equation}
  C= 2-{2\over
    6^3}\,\left(\zeta(3,\frac16)- \zeta(3,-\frac16)\right)\,.
\end{equation}
The Hurwitz zeta functions have the following $a$-expansion
\begin{equation}
  \zeta( s,a) = {1\over a^s} +\sum_{n\geq0}\, \left( s+n-1\atop n\right)\, \zeta(s+n)\, (-a)^n \,.
\end{equation}
Therefore
\begin{equation}
C=-2- {1\over54}  \sum_{n\geq0}   \, \left( 2n+3\atop 2n+1\right)\, \zeta(2n+4)\, \left(-\frac16\right)^{2n+1}\,.
\end{equation}
Remarking that

\begin{equation}
\sum_{n\geq0}   \, \left( 2n+3\atop 2n+1\right)\, \zeta(2n+4)\,
(-x)^{2n+1}= -{1\over 2x^3}+ {\pi^3\over2}\, {\cos(\pi x)\over
  \sin(\pi x)^3}
\end{equation}
one obtains that
\begin{equation}
    C=  - {\pi^3\over9\sqrt3}\,.  
\end{equation}
\end{proof}

It remains now to compute the amplitude asscociated to our Eisenstein
series $\delta(mot)= 6 \sqrt{3}\cdot eis_\psi/\pi$ and check that it
agrees with $\cJs^2(t)$ defined in eq.~\eqref{e:Js} modulo periods of the elliptic curve $\cE_t$. Consider
again the basic exact sequence~\eqref{36a}. We know from
lemma~\ref{lem6.4} that $\partial(\delta(mot)) \in H^1(X,
\sV_\Q)\subset H^1(X, \sV_\C)$. It will be convenient to pull back
this sequence and work over the upper half-plane $\H=\{\tau\in\mathbb C\ |\
\Imm(\tau)>0\}$. Write $V_\Z \subset V_\Q \subset V_\C\subset V\otimes
\sO_{\H}$ for the various base extensions of the representation of
$\Gamma_1(6)$ underlying the local system $\sV$. Let $V_\Z =
\Z\ve_1\oplus \Z\ve_2$ such that $dz = \tau\ve_1+\ve_2$. If $\gamma_1,
\gamma_2$ is the dual homology basis, this gives 
\eq{}{\int_{\gamma_1}dz = \tau;\ \ \int_{\gamma_2}dz = 1. 
}
The pairing $H^1(\sE_t,\Z) \otimes H^1(\sE_t, \Z) \to \Z(-1)$ yields a symplectic form
\eq{}{\langle \ve_1,\ve_2\rangle = 2\pi i = -\langle \ve_2,\ve_1\rangle, 
}
and $\langle \varepsilon_i,\varepsilon_i\rangle=0$ for
$i=1,2$.
Consider the pullback diagram \minCDarrowwidth.1cm
\eq{43c}{\begin{CD} 0 @>>> V_\C @>>> V\otimes \sO_\H @>d>> V \otimes \Omega^1_\H @>>> 0 \\
@. @| @AAA @AA 1\mapsto \delta(mot) A \\
0 @>>> V_\C @>>> N @>>> \C_\H @>>> 0
\end{CD}
}
The idea is we view the bottom sequence in~\eqref{43c} as underlying an extension of variations of Hodge structure over $\H$ and we compute the amplitude in the usual way~\eqref{31a} by lifting (ie. integrating) $\delta(mot) = (6\sqrt{3}/\pi)eis_\psi$ and pairing the resulting element in $V\otimes \sO_\H$ with $\omega$. We can write 
\eq{}{\omega = \varpi_r dz = \varpi_r (\tau\ve_1+\ve_2)\,.
}
Here $\varpi_r$ is the real period as in
equation~\eqref{e:realp}. Formally, we find that the amplitude is
given by the following Eichler integral
\ml{amplitude1}{\text{amplitude} = \\
{6\sqrt{3}\over\pi(2\pi i)^2}\varpi_r\left\langle \tau\ve_1+\ve_2, \int_\tau^{i\infty} \sum_{(a,b)\neq (0,0)} \frac{\psi(a)(x\ve_1+\ve_2) dx}{(ax+b)^3}\right\rangle = \\
{12i\sqrt{3}\over(2\pi i)^2}\varpi_r\,\int_\tau^{i\infty} \sum_{(a,b)\neq (0,0)} \frac{\psi(a)\,(\tau-x)}{(ax+b)^3} \,dx.
}
We do the integration term by term substituting
\eq{29d}{\int_\tau^{i\infty} \frac{dx}{(ax+b)^3} = \frac{-1}{2a(b+a\tau)^2};\quad \int_\tau^{i\infty} \frac{x dx}{(ax+b)^3} = -\frac{b+2a\tau}{2a^2(b+a\tau)^2}
}
Since $\psi(0)=0$, this yields
\eq{ampSum}{\text{amplitude} 
=({6i\sqrt{3}/(2\pi i)^2})\varpi_r\sum_{a\neq 0\atop b\in \Z} \frac{\psi(a)}{a^2(a\tau+b)}.
} 
Note there is a convergence issue here since the sum is not absolutely
convergent. We will treat this sum using the ``Eisenstein summation''
regularization following~\cite[eq.~(14) on page 13]{weil},  and we write
\eq{}{\lim_{N\to +\infty} \sum_{n=-N}^N \frac{1}{\tau+n} = \pi i\frac{q+1}{q-1};\quad q=\exp(2\pi i\tau).
}
Substituting in eq.~\eqref{ampSum}
\eq{8.32}{\text{amplitude} = ({-6\pi\sqrt{3}\over(2\pi i)^2})\varpi_r\sum_{a\neq 0}\frac{\psi(a)}{a^2}\frac{q^a+1}{q^a-1}
}
Since both $\psi(a)$ and $\frac{q^a+1}{q^a-1}$ are odd as functions of $a$, we can write this as
\eq{e:ampmot}{\text{amplitude} =\frac{12\pi\sqrt3}{(2\pi i)^2}\, \varpi_r\,
  \sum_{a\in\Z\atop a\neq0} {\psi(a)\over a^2} \,{1\over 1-q^a}
.
}
(Here of course we assume $|q|\le 1$. If $|q|>1$ we can write down a similar expression involving $q^{-1}$.)
Comparing with the expression for the sunset integral
in~\eqref{e:amplitude} we find the relation 

\begin{equation}\label{e:match}
  \textrm{amplitude}=  \cJs^2(t)+\textrm{periods}\,.  
\end{equation}
 \section*{Acknowledgements}
 P.V. would like to thank Claude Duhr and Lorenzo Magnea for
 encouraging him to analyze the sunset graph, and Don Zagier for having educated him about elliptic curves. He would like to thank  Herbert Gangl, Einan Gardi, and 
  Oliver Schnetz  for useful discussions.  S.B. would like to thank
  Sasha Beilinson, David Broadhurst, and Matt Kerr for their help. In
  particular, Beilinson's remark that the Eisenstein extension is
  determined by its residues played a central role in the
  argument. The technology of currents as developed by Kerr and
  collaborators in~\cite{kerrAJ}  provides an alternate approach to
  the motivic calculations.  
  
  The research of P.V. was supported by the ANR grant   reference QFT
  ANR 12 BS05 003   01, and the CNRS grant PICS number 6076.
 \appendix
\section{Elliptic Dilogarithm}\label{sec:ellipticdilog}

In this appendix we recall the main properties of the elliptic
dilogarithms following~\cite{BlochCMR,ZagierElliptic,Zagier}.

Starting from the Bloch-Wigner dilogarithm 
\begin{eqnarray}
  D(z)&=& \Imm(\Li_2(z)+\log|z| \log(1-z))\cr
&=& {1\over2i}\left(\Li_2(z)-\Li_2(\bar z)+\frac12\, \log(z\bar z)\,
  \log\left(1-z\over 1-\bar z\right)\right)\,,  
\end{eqnarray}
this function is univalued real analytic on $\mathbb P^1(\mathbb
C)\backslash\{0,1,\infty\}$, and continuous on $\mathbb P^1(\mathbb
C)$.
This function satisfies the following relations
\begin{eqnarray}
  \label{e:Drel}
D(e^{i\theta})&=&Cl_2(\theta)= \sum_{n=1}^\infty {\sin(n\theta)\over n^2},\qquad
\theta\in \IR\\
D(z^2)&=&2\,\left(D(z)+D(-z)\right)\,.
\end{eqnarray}
We have the following six relations~\cite{Zagier}
\begin{eqnarray}
  \label{e:Dfunc}
  D(z)&=&-D(\bar z)=D(1-z^{-1})=D((1-z)^{-1})\cr
&=&-D(z^{-1})=-D(1-z)=-D(-z(1-z)^{-1})\,.
\end{eqnarray}
The $D(z)$ function satisfies
\begin{equation}
  dD(z)= \log|z| d\arg (1-z) - \log|1-z| d\arg(z)\,.
\end{equation}
%


\end{document}